\newtheorem{theorem}{Theorem}
\newtheorem{definition}{Definition}
\def\A{{\mathcal A}}
\def\C{{\mathcal C}}
\def\F{{\mathcal F}}
\def\K{{\mathcal K}}
\def\R{{\mathcal R}}
\def\PP{{\mathcal P}}
\def\V{{\mathcal V}}
\def\XA{{\stackrel{\circ}{\mathcal A}}}
\def\XP{{\stackrel{\circ}{\mathcal P}}}
\def\XS{{\stackrel{\circ}{\mathcal S}}}
\newcommand{\eps}{\varepsilon}
\begin{document}
%
\title{On 2-Site Voronoi Diagrams under Geometric Distance Functions
\thanks{\copyright 2011 IEEE. Personal use of this material is permitted. Permission from IEEE must be obtained for all other uses, in any current or future media, including reprinting/republishing this material for advertising or promotional purposes, creating new collective works, for resale or redistribution to servers or lists, or reuse of any copyrighted component of this work in other works.}
}


\author{
   \IEEEauthorblockN{
      Gill Barequet\IEEEauthorrefmark{1},
      Matthew T. Dickerson\IEEEauthorrefmark{2},
      David Eppstein\IEEEauthorrefmark{3},
      David Hodorkovsky\IEEEauthorrefmark{4},
      Kira Vyatkina\IEEEauthorrefmark{5}\IEEEauthorrefmark{6}
   }
   \IEEEauthorblockA{
      \IEEEauthorrefmark{1}
      Dept.\ of Computer Science,
      The Technion---Israel Institute of Technology,
      Haifa~32000, Israel \\
      E-mail: {\tt barequet@cs.technion.ac.il}
   }
   \IEEEauthorblockA{
      \IEEEauthorrefmark{2}
      Dept.\ of Mathematics and Computer Science,
      Middlebury College, Middlebury, VT~05753 \\
      E-mail: {\tt dickerso@middlebury.edu}
   }
   \IEEEauthorblockA{
      \IEEEauthorrefmark{3}
      Dept.\ of Information and Computer Science,
      University of California, Irvine, CA~92717 \\
      E-mail: {\tt eppstein@ics.uci.edu}
   }
   \IEEEauthorblockA{
      \IEEEauthorrefmark{4}
      Dept.\ of Applied Mathematics,
      The Technion---Israel Institute of Technology \\
      Haifa~32000, Israel
   }
   \IEEEauthorblockA{
      \IEEEauthorrefmark{5}
      Dept.\ of Mathematics and Mechanics,
      Saint Petersburg State University, \\
      28 Universitetsky pr., Stary Peterhof,
      St.\ Petersburg~198504, Russia \\
      E-mail: \texttt{kira@math.spbu.ru}
   }
   \IEEEauthorblockA{
      \IEEEauthorrefmark{6}
      Dept.\ of Natural Sciences,
      Saint Petersburg State University of Information \\
         Technologies, Mechanics and Optics,
      49 Kronverkskiy pr., St.\ Petersburg~197101, Russia
   }
}

\IEEEspecialpapernotice{(IEEE CS, Proceedings of the 8th ISVD, Qingdao, China, June 28-30, 2010; ACCEPTED)}

\maketitle


\begin{abstract}
   We revisit a new type of a Voronoi diagram, in which
   distance is measured from a point to a \emph{pair} of points.
   We consider a few more such distance functions, based on geometric primitives, and analyze the structure and complexity of the nearest- and furthest-neighbor Voronoi
   diagrams of a point set with respect to these distance functions.
\end{abstract}

\begin{IEEEkeywords}
   distance function; lower envelope;
   Davenport-Schinzel theory; crossing-number lemma
\end{IEEEkeywords}

%
\IEEEpeerreviewmaketitle


\section{Introduction}

The Voronoi diagram is one of the most fundamental concepts in computational
geometry, which has plenty of applications in science and industry.  Much
information in this respect can be found in~\cite{Au91} and~\cite{OBS00};
for important recent achievements, see~\cite{G08}.

The basic definition of the Voronoi diagram applies to a set~$S$ of~$n$
points (also called \emph{sites}) in the plane: its \emph{nearest-neighbor}
Voronoi diagram~$V(S)$ is a partition of the plane into~$n$ regions, each
corresponding to a distinct site $s\in S$, and consisting of all the points
being closer to~$s$ than to any other site from~$S$.  Similarly, the
\emph{furthest-neighbor} Voronoi diagram of~$S$ is obtained by assigning
each point in the plane to the region of the most remote site.  These
notions can be generalized to higher-dimensional spaces, different types of
sites, and in other ways.

One of the recent generalizations of this concept is a family of so-called
\emph{2-site Voronoi diagrams}~\cite{BDD02}, which are based on distance
functions that define a distance from a point in the plane to a \emph{pair}
of sites from a given set~$S$.  Consequently, each Voronoi region corresponds
to an (unordered) pair of sites from~$S$.
The original motivation for the study~\cite{BDD02} was the famous
Heilbronn's triangle problem~\cite{Ro51}.  Other motivations are mentioned
therein.

For~$S$ being a set of points, Voronoi diagrams under a number of 2-site
distance functions have been investigated, which include arithmetic
combinations of point-to-point distances~\cite{BDD02,VB10} and certain
geometric distance functions~\cite{BDD02,DE09,HB09}.  In this work, we
develop further the latter direction.

Let $S\subset \mathbb{R}^2$, and consider $p,q\in S$ and a point~$v$ in the
plane.  We shall focus our attention on a few circle-based distance
functions:
\begin{itemize}
\item \emph{radius of circumscribing circle}:
      $\C(v,(p,q)) = \mathrm{Rad}(\circ(v,p,q))$, where $\circ(v,p,q)$ is
      the circle defined by $v,p,q$ and~$\mathrm{Rad}(c)$ is the radius of
      the circle~$c$;
\item \emph{radius of containing circle}:
      $\K(v,(p,q)) = \mathrm{Rad}(C(v,p,q))$, where $C(v,p,q)$ is
      the minimum circle containing $v,p,q$;\footnote{
         Obviously, $\circ(v,p,q) \neq C(v,p,q)$ if any of the three
         points is properly contained in the circle whose diameter is
         defined by the two other points.
      }
\item \emph{view angle}:
      $\V(v,(p,q)) = \measuredangle{pvq}$, or, equivalently, half of the
      angular measure of the arc of $\circ(v,p,q)$ that the angle
      $\measuredangle{pvq}$ subtends;
\item \emph{radius of inscribed circle}:
      $\R(v,(p,q))$ is the radius of the circle inscribed in
      $\triangle(v,p,q)$;
\item \emph{center-of-circumscribing-circle-based functions}:
      let $o_{vpq}$ denote the center of the circle $\circ(v,p,q)$; then
      $\XS(v,(p,q))$, $\XA(v,(p,q))$, and $\XP(v,(p,q))$ are the
      distance from $o_{vpq}$ to the segment $pq$, the area of
      $\triangle o_{vpq}pq$, and the perimeter of $\triangle o_{vpq}pq$,
      respectively;
\end{itemize}
and on a parameterized perimeter distance function:
\begin{itemize}
\item \emph{parameterized perimeter}:
       $\PP_c(v,(p,q))=|vp|+|vq|+c \cdot |pq|$, where $c \geq -1$.
\end{itemize}
The first and third circle-based distance functions were first mentioned
in~\cite{H05}.  The last function generalizes the perimeter distance function
$\PP(v,(p,q)) = \mathrm{Per}(\triangle(v,p,q)$ introduced in~\cite{BDD02},
and later addressed in~\cite{DE09,HB09}.

Since two points define a segment, any 2-point site distance function
$d(v,(p,q))$ provides a distance between the point $v$ and the segment $pq$,
and vice versa.  Consequently, geometric structures akin to 2-site Voronoi
diagrams can arise as Voronoi diagrams of segments.  This alternative approach
was independently undertaken by Asano et al., and the ``view angle'' and
``radius of circumscribing circle'' distance functions reappeared in their
works~\cite{AKTT06,AKTT07} on Voronoi diagrams for segments soon after they
had been proposed by Hodorkovsky~\cite{H05} in the context of 2-site Voronoi
diagrams.  However, as Asano's et al.\ research was originally motivated by
mesh generation and improvement tasks, they were mostly interested in sets of
segments representing edges of a simple polygon, and thus, non-intersecting
(except, possibly, at the endpoints), what significantly alters the essence
of the problem.

In this paper, we analyze the structure and complexity of 2-site Voronoi
diagrams under the distance functions listed above.  Our obtained results are
mostly of theoretical interest.  The method used to derive an upper bound on
the complexity of the nearest-neighbor 2-site Voronoi diagram under the ``parameterized
perimeter'' distance function is first developed for the case of $c=1$,
yielding a much simpler proof for the ``perimeter'' function than the one
developed in~\cite{HB09}, and then generalized to any $c \geq 0$.
We summarize our new results in Table~\ref{T-summary}.
\begin{table*}
   \centering
   \begin{tabular}{|c||c|c|c|c|}
      \hline
      $\F$ & $\C$ & $\K$ & $\V$ & $\R$ \\
      \hline
      $|V_\F^{(n)}(S)|$ & $O(n^{4+\eps})$ & $\Omega(n)$, $O(n^{2+\eps})$ &
         $\Omega(n^4)$, $O(n^{4+\eps})$ & $O(n^{4+\eps})$ \\
      \hline
      $|V_\F^{(f)}(S)|$ & $\Omega(n^4)$, $O(n^{4+\eps})$ & $O(n^{4+\eps})$ &
         $\Omega(n)$, $O(n^{4+\eps})$ & $\Omega(n^4)$, $O(n^{4+\eps})$ \\
      \hline
   \end{tabular} \\
   \begin{tabular}{|c||c|c|c|c|}
      \hline
      $\F$ & $\XS$ & $\XA$ & $\XP$ & $\PP$ \\
      \hline
      $|V_\F^{(n)}(S)|$ & $\Omega(n^4)$, $O(n^{4+\eps})$ &
         $\Omega(n^4)$, $O(n^{4+\eps})$ & $\Omega(n)$, $O(n^{4+\eps})$ &
         $O(n^{2+\eps})$ \\
      \hline
      $|V_\F^{(f)}(S)|$ & $\Omega(n^4)$, $O(n^{4+\eps})$ &
         $\Omega(n^4)$, $O(n^{4+\eps})$ & $\Omega(n^4)$, $O(n^{4+\eps})$ & \\
      \hline
   \end{tabular}
   \caption{Our results:  Worst-case combinatorial complexities of 2-site
            Voronoi diagrams of a set $S$ of $n$ points with respect to
            different distance functions}
   \label{T-summary}
\end{table*}

Throughout the paper we use the notation $V_\F^{(n)}(S)$ (resp.,
$V_\F^{(f)}(S)$) for denoting the nearest- (resp., furthest-) 2-site Voronoi
diagram, under the distance function $\F$, of a point set $S$.
The set $S$ is always assumed to contain $n$ points.


\section{Circumscribing Circle}

\label{S-circum-cir}

Let $\circ(p,q,r)$ denote the unique circle defined by three distinct
points $p$, $q$, and $r$ in the plane.  We now define the 2-site
circumscribing-circle distance function:
\begin{definition}
   Given two points $p,q$ in the plane, the ``circumcircle distance''
   $\C$ from a point $v$ in the plane to the unordered pair $(p,q)$ is
   defined as $\C(v,(p,q)) = \mathrm{Rad}(\circ(v,p,q))$.
\end{definition}
For a fixed pair of points $p$ and $q$, the curve
$\C(v,(p,q)) = \infty$ is the line $\overline{pq}$.
This implies that all the points on $\overline{pq}$ belong to
the region of $(p,q)$ in $V_{\C}^{(f)}(S)$. In this
section we assume that the points in $S$ are in general position,
i.e., there are no three collinear points, and no three pairs of
points define three distinct lines that intersect at one point.
The given sites are singular points, that is, for any two sites
$p,q$, the function $\C(v,(p,q))$ is not defined at $v=p$ or $v=q$.

\begin{theorem}
   Let $S$ be a set of $n$ points in the plane.
   The combinatorial complexity of $V_{\C}^{(f)}(S)$ is $\Omega(n^4)$.
\end{theorem}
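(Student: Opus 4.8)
The plan is to prove the stronger statement that $|V_{\C}^{(f)}(S)| = \Omega(n^4)$ for \emph{every} set $S = \{p_1,\dots,p_n\}$ of $n$ points in general position, so that no special construction is needed. The starting point is the elementary identity $\C(v,(p,q)) = |vp|\cdot|vq| / \bigl(2\,\mathrm{dist}(v,\overline{pq})\bigr)$, valid for every $v \notin \overline{pq}$, which makes precise the fact noted above that $\C$ blows up exactly along the connecting line $\overline{pq}$. By the general-position hypotheses the $\binom n2$ connecting lines $\overline{p_ip_j}$ pairwise cross in $\Theta(n^4)$ distinct points that are not sites, with exactly two of the lines through each. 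I would then show that every such crossing point is a vertex of $V_{\C}^{(f)}(S)$, which gives the bound at once.

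Fix such a crossing $z = \ell_1 \cap \ell_2$, with $\ell_1 = \overline{p_ip_j}$, $\ell_2 = \overline{p_kp_l}$ and $i,j,k,l$ distinct. The key claim is that in a small enough disk $U_z$ around $z$ the diagram $V_{\C}^{(f)}(S)$ is governed by the two pairs $(p_i,p_j)$ and $(p_k,p_l)$ alone: as $v \to z$ both $\C(v,(p_i,p_j)) \to \infty$ and $\C(v,(p_k,p_l)) \to \infty$ because $z$ lies on $\ell_1$ and on $\ell_2$, whereas for every other pair $(p_a,p_b)$ the line $\overline{p_ap_b}$ misses $z$, so $\C(v,(p_a,p_b)) \to |zp_a|\cdot|zp_b| / \bigl(2\,\mathrm{dist}(z,\overline{p_ap_b})\bigr) < \infty$; shrinking $U_z$, every pair other than those two is therefore dominated throughout $U_z$. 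Writing $\ell_r = \{L_r = 0\}$ with $L_r$ affine of unit gradient, the bisector of the two surviving pairs inside $U_z$ is $|vp_i||vp_j|\,|L_2(v)| = |vp_k||vp_l|\,|L_1(v)|$; since the coefficients tend to the positive constants $c_1 = |zp_i||zp_j|$ and $c_2 = |zp_k||zp_l|$, this curve is, up to an $O(|v-z|^2)$ perturbation, the pair of lines $c_1 L_2 = \pm\,c_2 L_1$ through $z$, which interleave $\ell_1$ and $\ell_2$. Along $\ell_1$ (resp. $\ell_2$) near $z$ the pair $(p_i,p_j)$ (resp. $(p_k,p_l)$) is the unique maximizer, so both regions reach $z$; and since the winning pair can only change across the bisector, it is constant on each of the four angular wedges cut at $z$ by the two bisector branches, with the two pairs alternating. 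Hence four boundary branches of $V_{\C}^{(f)}(S)$ emanate from $z$, i.e. $z$ is a (degree-$4$) vertex, and summing over the $\Theta(n^4)$ non-site crossings yields $\Omega(n^4)$ vertices, hence combinatorial complexity $\Omega(n^4)$.

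The only part that requires care is the local analysis at $z$: one must justify discarding the $O(|v-z|)$ and $O(|v-z|^2)$ errors (legitimate because each factor $|vp_a||vp_b|$ converges to a nonzero constant), confirm that exactly the two pairs whose connecting lines pass through $z$ survive in $U_z$, and --- the genuinely load-bearing point --- check that the two bisector branches actually separate the two regions, which comes down to the interleaving of the branches with $\ell_1,\ell_2$ together with the observation that each open wedge between consecutive bisector rays contains a sub-arc of $\ell_1$ or of $\ell_2$ on which the corresponding pair wins outright. The crossing count and the bookkeeping with the general-position hypotheses are routine.
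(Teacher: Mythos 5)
Your proposal is correct and follows essentially the same route as the paper: the paper's proof is the one-line observation that the $\Theta(n^2)$ connecting lines (on which $\C(\cdot,(p,q))=\infty$, so they lie in the region of $(p,q)$ in the furthest diagram) have $\Theta(n^4)$ crossings, all of which are features of $V_{\C}^{(f)}(S)$. You merely supply the local analysis--correctly--showing that each such crossing is in fact a degree-$4$ vertex, which the paper leaves implicit.
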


\begin{proof}
   The $n$ points of $S$ define $\Theta(n^2)$ lines, which always have
   $\Theta(n^4)$ intersection points.  All these intersection points
   are features of $V_{\C}^{(f)}(S)$, and hence the lower bound.
\end{proof}

\begin{theorem}
   \label{TH-ub-circ}
   Let $S$ be a set of $n$ points in the plane.
   The combinatorial complexity of both $V_{\C}^{(n)}(S)$ and
   $V_{\C}^{(f)}(S)$ is $O(n^{4+\eps})$ (for any $\eps > 0$).
\end{theorem}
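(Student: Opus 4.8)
The plan is to express each of the two diagrams as the minimization (respectively, maximization) diagram of a collection of $\Theta(n^2)$ bivariate functions, and then to invoke the near-quadratic bound on the complexity of lower envelopes of well-behaved surfaces in $\mathbb{R}^3$. For each unordered pair $(p,q)$ of sites put $f_{pq}(v)=\C(v,(p,q))=\mathrm{Rad}(\circ(v,p,q))$; there are $\binom{n}{2}=\Theta(n^2)$ such functions. By definition a point $v$ lies in the cell of $(p,q)$ in $V_{\C}^{(n)}(S)$ exactly when $f_{pq}(v)\le f_{p'q'}(v)$ for every other pair $(p',q')$, so $V_{\C}^{(n)}(S)$ is precisely the minimization diagram of $\{f_{pq}\}$ --- the vertical projection onto the plane of the lower envelope of the graphs $z=f_{pq}(x,y)$ in $\mathbb{R}^3$ --- and symmetrically $V_{\C}^{(f)}(S)$ is the maximization diagram. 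Hence it suffices to bound the complexity of the lower (or upper) envelope of $\Theta(n^2)$ such surfaces.

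Next I would check that each graph is (a bounded number of pieces of) an algebraic surface of constant degree, so that the standard machinery applies. It is cleanest to work with the curvature rather than the radius: $1/R=4\,\mathrm{Area}(\triangle(v,p,q))\big/\big(|vp|\,|vq|\,|pq|\big)$, so $(1/R)^2$ is a rational function of the coordinates of $v$ whose numerator ($16\,\mathrm{Area}^2$) is quadratic in $v$ and whose denominator ($|vp|^2|vq|^2|pq|^2$) is quartic in $v$ and strictly positive away from the two points $p,q$; in particular $(1/R)^2$ extends to a smooth bounded algebraic function on $\mathbb{R}^2\setminus\{p,q\}$ that vanishes on $\overline{pq}$, thereby removing the ``$R=\infty$'' singularity along that line. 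Since $R_{pq}\le R_{p'q'}$ iff $(1/R_{pq})^2\ge(1/R_{p'q'})^2$ (all quantities being nonnegative), $V_{\C}^{(n)}(S)$ is the maximization diagram of $\{(1/R_{pq})^2\}$ and $V_{\C}^{(f)}(S)$ its minimization diagram, each built from $\Theta(n^2)$ algebraic surface patches of constant description complexity; the general-position assumption on $S$ guarantees that no two of these patches coincide and that any two of them meet in a one-dimensional algebraic set of bounded degree.

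Finally I would apply the theorem that the lower (equivalently, by reflection, the upper) envelope of $m$ algebraic surface patches in $\mathbb{R}^3$ of constant description complexity, together with its associated minimization/maximization diagram, has combinatorial complexity $O(m^{2+\eps})$ for every $\eps>0$. Substituting $m=\Theta(n^2)$ gives $O\big((n^2)^{2+\eps'}\big)=O(n^{4+2\eps'})$, which is $O(n^{4+\eps})$ after relabeling $\eps$, for both $V_{\C}^{(n)}(S)$ and $V_{\C}^{(f)}(S)$.

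The main obstacle --- and essentially the only subtle point --- is the bookkeeping around the degeneracies of $f_{pq}$: it is undefined at $p,q$ and blows up along $\overline{pq}$, so the raw radius cannot be fed directly into the surface-patch framework. The $(1/R)^2$ substitution disposes of the line; the two bad points per surface are isolated (only $O(n^2)$ of them in all) and contribute nothing to the asymptotics; and one must still verify that the constant-description-complexity and general-position hypotheses of the envelope theorem are met, which follows routinely from the non-degeneracy assumed for $S$ --- the same assumption that produces the $\Theta(n^2)$ lines and their $\Theta(n^4)$ crossings in the lower bound, which is in turn subsumed by the $O(n^{4+\eps})$ upper bound.
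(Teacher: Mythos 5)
Your proposal is correct and follows essentially the same route as the paper: reduce each diagram to the lower/upper envelope of $\Theta(n^2)$ constant-degree algebraic surface patches and invoke the Sharir--Agarwal $O(m^{2+\eps})$ envelope bound with $m=\Theta(n^2)$. The only (harmless) difference is that you square the curvature $1/R$ where the paper squares the radius $R$ itself --- a monotone change that swaps min and max and slightly cleans up the singularity along the line $\overline{pq}$, but does not alter the argument.
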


\begin{proof}
   Clearly, the combinatorial complexity of $V_{\C}^{(n)}(S)$ or
   $V_{\C}^{(f)}(S)$ is identical to that of the respective diagram of
   the 2-site distance function
   $\C^2(v,(p,q)) = \mathrm{Rad}^2(\circ(v,p,q))$.
   It is known that
   $\mbox{Rad}^2(\circ(v,p,q)) =
       ( (|vp| |vq| |pq|)/(4 |\triangle vpq|) )^2 =
       (((v_x-p_x)^2+(v_y-p_y)^2) ((v_x-q_x)^2+(v_y-q_y)^2)
          ((p_x-q_x)^2+(p_y-q_y)^2))/
          (4(v_x(p_y-q_y)-p_x(v_y-q_y)+q_x(v_y-p_y))^2).$
   The respective collection of $\Theta(n^2)$ Voronoi surfaces
   fulfills Assumptions~7.1 of~\cite[p.~188]{SA95}:
   \begin{enumerate}
   \item Each surface is an algebraic surface of maximum constant degree;
   \item Each surface is totally defined (this is stronger than needed); and
   \item Each triple of surfaces intersects in at most a constant number
         of points.
   \end{enumerate}
   Hence, we may apply Theorem~7.7 of~[ibid., p.~191] and
   obtain the claimed bound on the complexity of $V_{\C}^{(n|f)}(S)$.
\end{proof}


\section{Containing Circle}

\label{S-contain-cir}

Let $C(p,q,r)$ denote the minimum-radius circle
containing three points $p$, $q$, and $r$ in the plane.
(That it, $C(p,q,r)$ is the minimum circle containing the triangle
$\triangle pqr$.)
We now define the 2-site containing-circle distance function:
\begin{definition}
   Given two points $p,q$ in the plane, the ``containing-circle distance''
   $\K$ from a point $v$ in the plane to the unordered pair $(p,q)$ is
   defined as $\K(v,(p,q)) = \mathrm{Rad}(C(v,p,q))$.
\end{definition}

In our context we have that $p \neq q$.
Assume first that $v \ne p,q$.
Observe that if all angles of $\triangle pqr$ are acute (or $\triangle pqr$
is right-angled), then $C(p,q,r)$ is identical to $\circ(p,q,r)$.
Otherwise, if one of the angles of $\triangle pqr$ is obtuse, then
$C(p,q,r)$ is the circle whose diameter is the longest edge of
$\triangle pqr$, that is, the edge opposite to the obtuse angle.
If $v$ coincides with either $p$ or $q$, then $C(v,p,q)$ is the circle
whose diameter is the line segment $pq$.

\begin{theorem}
   Let $S$ be a set of $n$ points in the plane.
   The combinatorial complexity of $V_{\K}^{(n)}(S)$ is $\Omega(n)$.
\end{theorem}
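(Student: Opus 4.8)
The plan is to exhibit a set $S$ of $n$ points for which $V_\K^{(n)}(S)$ has at least $n-1$ pairwise distinct two-dimensional regions; since each such region is a separate face of the diagram, this immediately gives a combinatorial complexity of $\Omega(n)$.

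The one structural fact I would rely on is the following. For any two points $p,q$ and any point $v$, the minimum enclosing circle $C(v,p,q)$ has radius at least half of any pairwise distance among $v,p,q$; in particular $\K(v,(p,q))\ge|pq|/2$, with equality exactly when $C(v,p,q)$ is the circle having $pq$ as a diameter. As recalled above, the latter happens precisely when $\measuredangle pvq\ge\pi/2$, i.e.\ when $v$ lies in the closed disk with diameter $pq$; throughout that disk the value $\K(\cdot,(p,q))$ is the constant $|pq|/2$.

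First I would take $S=\{p_1,\dots,p_n\}$ with $p_i=(i,0)$ (general position is not assumed in this section, so a collinear set is admissible). For $1\le i\le n-1$ let $\Delta_i$ denote the open disk of radius $1/2$ centred at $(i+\tfrac12,0)$, that is, the interior of the disk with diameter $p_ip_{i+1}$. The disks $\Delta_1,\dots,\Delta_{n-1}$ are pairwise disjoint and none of them contains a site. For every $v\in\Delta_i$ the fact above gives $\K(v,(p_i,p_{i+1}))=1/2$, whereas for any other unordered pair $(p_a,p_b)$ we have $\K(v,(p_a,p_b))\ge|p_ap_b|/2$, which is $\ge 1$ for a non-consecutive pair and $>1/2$ for a consecutive pair $(p_j,p_{j+1})$ with $j\ne i$ (equality $=1/2$ would force $v$ into the closed disk with diameter $p_jp_{j+1}$, which is disjoint from the open disk $\Delta_i$). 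Hence $(p_i,p_{i+1})$ is the unique nearest pair at every point of $\Delta_i$, so its Voronoi region in $V_\K^{(n)}(S)$ contains the non-empty open set $\Delta_i$. This produces $n-1$ distinct two-dimensional faces, and the claimed bound follows.

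The step that needs the most care is ruling out that some other pair (a non-consecutive one, or the pair $(p_i,p_{i+1})$ itself evaluated at a point $v$ outside its own diametral disk) ties or beats the chosen consecutive pair inside $\Delta_i$; this is exactly what the uniform inequality $\K(v,(p,q))\ge|pq|/2$ settles, together with the interior-disjointness of the $\Delta_i$. If one prefers the sites in general position, the same reasoning goes through with $p_i=(i,\epsilon_i)$ for sufficiently small, generic $\epsilon_i$ chosen so that the consecutive lengths $|p_ip_{i+1}|$ are strictly increasing: then only consecutive pairs can possibly undercut $(p_i,p_{i+1})$, only those of strictly smaller length can, and none of them reaches the centre of $\Delta_i$, so each of the $n-1$ pairs again owns a non-empty two-dimensional region.
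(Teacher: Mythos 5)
Your proof is correct. The key inequality you isolate --- $\K(v,(p,q))\ge |pq|/2$, with equality exactly when $v$ lies in the closed disk having $pq$ as diameter --- is valid (the minimum enclosing circle must contain both $p$ and $q$, hence has diameter at least $|pq|$), and your disjointness check for the diametral disks of the unit-spaced collinear points correctly rules out ties from other consecutive pairs, while non-consecutive pairs are eliminated by $|p_ap_b|\ge 2$. This gives $n-1$ pairs each owning a non-empty open region, hence $\Omega(n)$ faces.

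Your route differs from the paper's. The paper does not build a special configuration: it argues that for an (essentially) arbitrary set $S$ in which each point has a unique closest neighbor, and for each $p\in S$ with closest neighbor $q$, the points of the segment $pq$ sufficiently close to $p$ lie in the region of $(p,q)$; since each pair can be charged this way at most twice, at least $\lceil n/2\rceil$ regions are non-empty. That version buys universality --- the linear lower bound holds for every such point set, not just a worst-case family --- at the cost of a slightly terser verification that no other pair undercuts $(p,q)$ near $p$ (which needs a continuity argument around $v=p$). Your version buys an explicit construction with full-dimensional open witnesses for $n-1$ distinct regions and a cleanly stated, uniformly applicable inequality that settles all competing pairs at once; it also sidesteps the unique-closest-neighbor assumption. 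Both arguments ultimately rest on the same structural fact about when the containing circle degenerates to the diametral circle, so the difference is in packaging rather than in the underlying geometry; either is acceptable for a worst-case $\Omega(n)$ bound.
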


\begin{proof}
   For simplicity assume that each point from $S$ has a unique
   closest neighbor in $S$.
   For each point $p \in S$, consider its closest neighbor $q$.
   Then, the points on the line segment $pq$ lying sufficiently close to
   $p$ belong to the region of $(p,q)$ in $V_{\K}^{(n)}(S)$, which is thus
   non-empty.  Since no region is thereby encountered more than twice,
   $V_{\K}^{(n)}(S)$ has at least $\lceil n/2 \rceil$ non-empty regions.
   The claim follows.
\end{proof}

\begin{theorem}
   Let $S$ be a set of $n$ points in the plane.
   The combinatorial complexity of $V_{\K}^{(n)}(S)$ is $O(n^{2+\eps})$
   (for any $\eps > 0$).
\end{theorem}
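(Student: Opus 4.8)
The plan is to reduce the containing-circle distance function to a small number of simpler distance functions, each of which defines a Voronoi diagram whose surfaces form a family of bivariate functions amenable to Davenport--Schinzel-type bounds, and then to combine these sub-diagrams via an overlay argument. The key observation is the dichotomy already recorded before the statement: for a given pair $(p,q)$, the region of the plane in which $v,p,q$ has an obtuse (or right) angle at $v$ — i.e.\ $v$ lies strictly inside the disk with diameter $pq$ — contributes $\K(v,(p,q)) = |pq|/2$, a \emph{constant} function of $v$; the region in which the obtuse angle is at $p$ (resp.\ $q$) contributes $\K(v,(p,q)) = |vq|/2$ (resp.\ $|vp|/2$), i.e.\ a simple point-to-point distance; and the remaining region (all angles acute) contributes $\mathrm{Rad}(\circ(v,p,q))$, the circumcircle function already analyzed in Section~\ref{S-circum-cir}. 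The boundaries between these three regimes, for a fixed pair $(p,q)$, are pieces of the circle with diameter $pq$ together with the two lines through $p$ and through $q$ perpendicular to $pq$ — all of bounded algebraic degree.

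First I would make this decomposition precise: partition the plane, for each of the $\Theta(n^2)$ pairs, into the constant-regime disk, the two ``diameter through an endpoint'' half-strips, and the acute-regime complement, and correspondingly write each Voronoi surface $z = \K(v,(p,q))$ as a piecewise-algebraic graph with $O(1)$ pieces of $O(1)$ degree. Next I would bound the complexity of the lower envelope restricted to each regime. In the constant regime the surface over pair $(p,q)$ is the horizontal plane $z=|pq|/2$ clipped to a disk; the lower envelope of $\Theta(n^2)$ such clipped constant surfaces is essentially an arrangement-of-disks / power-diagram-like structure of complexity $O(n^2)$ — indeed here the nearest pair among those whose diameter-disk contains $v$ is the one of smallest $|pq|$, so this piece of the diagram is a (sub)diagram of the arrangement of $\Theta(n^2)$ circles, of complexity $O(n^{2})$. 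In the ``endpoint'' regime, the relevant value $|vq|/2$ depends only on the single site $q$, so the contribution there is governed by (a clipped portion of) the ordinary nearest-point Voronoi diagram of $S$, again of complexity $O(n)$ within the relevant half-strips, for total $O(n^2)$ after accounting for all pairs/clips. The acute regime is the one that forces the $\eps$ in the exponent: there each surface is a genuine algebraic surface of constant degree, any triple of which meets in $O(1)$ points, so Assumptions~7.1 of \cite[p.~188]{SA95} hold and Theorem~7.7 of [ibid.] gives a lower-envelope complexity of $O(m^{1+\eps})$ for $m$ surfaces — but here $m = \Theta(n^2)$, which would naively only give $O(n^{2+\eps})$ directly, and that is in fact exactly the bound we want, \emph{provided} we can argue that within the acute regime the contribution is not worse than a near-quadratic number of surfaces' worth of envelope.

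The crucial extra ingredient, therefore, is a charging/combinatorial argument showing that although there are $\Theta(n^2)$ pairs, the envelope features arising in the acute regime are dominated by $O(n^{2+\eps})$: one way is to note that a Voronoi vertex of $V_{\K}^{(n)}(S)$ in the acute regime is equidistant (in the $\K$ sense, hence in the circumcircle sense) from three pairs, and by general position the three defining circles $\circ(v,p_i,q_i)$ are concentric or meet at $v$ in a constant-bounded configuration; then a standard lifting/linearization or a direct application of the clipped-lower-envelope bound of \cite{SA95} for $\Theta(n^2)$ constant-degree surface patches yields $O(n^{2+\eps})$. Finally I would take the overlay of the three sub-diagrams (constant-regime $O(n^2)$, endpoint-regime $O(n^2)$, acute-regime $O(n^{2+\eps})$) together with the $\Theta(n^2)$ regime-boundary curves; the overlay of planar subdivisions of complexities $C_1,C_2,C_3$ and a curve arrangement of complexity $C_4$ has complexity $O(C_1+C_2+C_3+C_4 + (\text{number of crossings}))$, and since all curves involved are of bounded degree the crossing count is $O(n^{2+\eps})$, giving the claimed bound. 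I expect the main obstacle to be the acute-regime step: making sure that the restriction-to-a-region ("clipped") version of the Davenport--Schinzel lower-envelope theorem applies with the right parameters, and verifying that the three-surface intersection and general-position hypotheses genuinely hold for the circumcircle surfaces so that the $O(n^{2+\eps})$ — rather than $O(n^{4+\eps})$ — bound is legitimate here even though the full circumcircle diagram $V_{\C}^{(n)}(S)$ was only bounded by $O(n^{4+\eps})$ in Theorem~\ref{TH-ub-circ}; the point being that in the $\K$ setting the acute regime of each pair is a genuinely restricted region, and off that region the much cheaper constant/point-distance behavior takes over.
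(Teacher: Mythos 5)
There is a genuine gap, and it is the central one. Your argument never reduces the number of \emph{relevant} pairs below $\Theta(n^2)$, and without that reduction no regime-by-regime envelope bound can reach $O(n^{2+\eps})$. Concretely: Theorem~7.7 of~\cite[p.~191]{SA95} bounds the lower envelope of $m$ constant-degree bivariate surface patches by $O(m^{2+\eps})$, not by $O(m^{1+\eps})$ as you state (the nearly-linear bound is for \emph{univariate} functions). So your ``naive'' computation for the acute regime with $m=\Theta(n^2)$ surfaces gives $O(n^{4+\eps})$ --- exactly the bound of Theorem~\ref{TH-ub-circ} for the unrestricted circumcircle diagram --- and not $O(n^{2+\eps})$. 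The same problem infects the other regimes: the minimization diagram of $\Theta(n^2)$ constant functions clipped to disks (``smallest diameter-disk containing $v$'') is the lower envelope of $\Theta(n^2)$ patches and can in general have complexity quadratic in the number of patches, i.e.\ $\Theta(n^4)$; your claim of $O(n^2)$ for it is unsupported. The closing ``charging/combinatorial argument'' you gesture at (three circles through a Voronoi vertex in general position) is not an argument; it does not bound the number of envelope features by anything better than the generic $O(m^{2+\eps})$.

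The missing idea is geometric, not envelope-theoretic: one must show that only $O(n)$ pairs $(p,q)$ can have a non-empty region in $V_{\K}^{(n)}(S)$ at all. The paper does this by observing that if $v$ lies in the region of $(p,q)$, then the minimum enclosing circle $C(v,p,q)$ must be empty of every other site $x\in S$ --- otherwise, since removing a determining point of a minimum enclosing circle strictly shrinks it (\cite[Lemma~4.14]{BKOS08}), replacing $p$ by the interior site $x$ would give $\K(v,(x,q))<\K(v,(p,q))$, contradicting $v$'s membership in the region of $(p,q)$. Hence there is a circle through $p$ and $q$ empty of other sites, so $pq$ is a Delaunay edge, and only $O(n)$ pairs contribute surfaces. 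Applying the $O(m^{2+\eps})$ envelope bound to these $m=O(n)$ piecewise-algebraic surfaces (each consisting of $O(1)$ well-behaved patches, which is where your regime decomposition is actually useful) then yields $O(n^{2+\eps})$. Without this pruning step your approach cannot close.
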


\begin{proof}
   Let a point $v$ belong to a non-empty region of $(p,q)$.
   No matter if the triangle $\triangle vpq$ is acute
   (Figure~\ref{fig:contain-NN}(a)),
   \begin{figure}
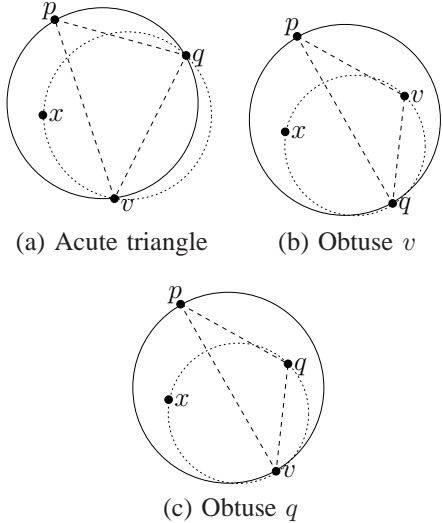

      \centering
      \begin{tabular}{cc}
         \scalebox{0.50}{\input{rad-con-a-BIG.tex}} &
            \scalebox{0.50}{\input{rad-con-b-BIG.tex}} \\
         (a) Acute triangle & (b) Obtuse $v$
      \end{tabular} \medskip \\
      \begin{tabular}{c}
         \scalebox{0.50}{\input{rad-con-c-BIG.tex}} \\
         (c) Obtuse $q$
      \end{tabular}
      \caption{If $p,q$ have a non-empty region in $V_{\K}^{(n)}(S)$,
               then $pq$ is an edge in DT$(S)$.}
      \label{fig:contain-NN}
   \end{figure}
   $\triangle vpq$ is obtuse with $v$ being the obtuse vertex
   (Figure~\ref{fig:contain-NN}(b)), or $\triangle vpq$ is obtuse with $p$
   or $q$ being the obtuse vertex (Figure~\ref{fig:contain-NN}(c)),
   the circle $C(v,p,q)$ cannot contain any other point $x \in S$.
   Otherwise, regardless of the location of $x$ in $C(v,p,q)$, we will
   always have $\K(v,(p,q)) > \K(v,(x,q))$, which is a contradiction.
   This follows from the fact (see~\cite[Lemma~4.14]{BKOS08}) that given
   a point set $K$ and its minimum enclosing circle $C$, where $C$ is
   defined by three points $a,b,c \in K$ (resp., two diametrical points
   $s,t \in K$), removing from $K$ one of $a,b,c$ (resp., one of $s,t$) will
   result in a point set with a smaller minimum enclosing circle.
   Thus, there is a circle containing $p,q$ that is empty of any other
   site from $S$.  This immediately implies that $pq$ is an edge
   of the Delaunay triangulation of $S$.  Consequently, there are $O(n)$
   pairs of sites in $S$ that have non-empty regions in $V_{\K}^{(n)}(S)$.
   Furthermore, it follows from the definition of $\K(v,(p,q))$ that
   the respective Voronoi surface of $(p,q)$ is made of a constant number
   of patches, each of which is a ``well-behaved'' function in the sense
   discussed in the proof of Theorem~\ref{TH-ub-circ}.  Again, by standard
   Davenport-Schinzel machinery, the combinatorial complexity of the lower
   envelope of these $O(n)$ surfaces is $O(n^{2+\eps})$ (for any $\eps > 0$),
   and the claim follows.
\end{proof}

\begin{theorem}
   Let $S$ be a set of $n$ points in the plane.
   The combinatorial complexity of $V_\K^{(f)}(S)$ is $O(n^{4+\eps})$
   (for any $\eps > 0$).
\end{theorem}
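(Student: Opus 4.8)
The plan is to follow the same route as the proof of Theorem~\ref{TH-ub-circ}: the combinatorial complexity of $V_\K^{(f)}(S)$ equals, up to a constant factor, that of the \emph{upper} envelope (the maximization diagram) of the $\Theta(n^2)$ Voronoi surfaces $\sigma_{pq}\colon v\mapsto \K(v,(p,q))$, one per unordered pair $\{p,q\}\subseteq S$, so it suffices to bound this envelope via the Davenport--Schinzel machinery of~\cite{SA95}. Exactly as for the circumcircle function, one may work with $\K^2$ instead of $\K$ without changing the diagram, which removes the square roots. The only new feature relative to Theorem~\ref{TH-ub-circ} is that $\sigma_{pq}$ is not a single algebraic surface but a piecewise one; the first step is to describe its pieces.

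For a fixed pair $p,q$, partition the plane by the circle $\gamma_{pq}$ having $pq$ as a diameter together with the two lines $\ell_p,\ell_q$ through $p$ and $q$ perpendicular to $\overline{pq}$; these are precisely the loci where the type of $\triangle vpq$ changes. One obtains four regions. On the half-plane bounded by $\ell_p$ not containing $q$, the angle $\angle vpq$ is obtuse and $\K(v,(p,q))=|vq|/2$; symmetrically, beyond $\ell_q$ one has $\K(v,(p,q))=|vp|/2$; inside $\gamma_{pq}$ the angle $\angle pvq$ is obtuse and $\K(v,(p,q))\equiv|pq|/2$ is constant; and on the remaining part of the slab between $\ell_p,\ell_q$ (outside $\gamma_{pq}$, where $\triangle vpq$ is acute) $\K(v,(p,q))=\C(v,(p,q))=\mathrm{Rad}(\circ(v,p,q))$, the bounded-degree algebraic function already used in Theorem~\ref{TH-ub-circ} (which here has non-vanishing denominator, since acute triangles are non-degenerate). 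The function is continuous across the common boundaries: on $\gamma_{pq}$ the right angle subtends a diameter, so $\C(v,(p,q))=|pq|/2$ there. Consequently the graph of $\sigma_{pq}$ splits into $O(1)$ algebraic patches, each a portion of a bounded-degree algebraic surface bounded by $O(1)$ bounded-degree algebraic arcs (images of $\gamma_{pq}$, $\ell_p$, $\ell_q$) --- precisely the sort of ``well-behaved'' patches appealed to in the $O(n^{2+\eps})$ bound for $V_\K^{(n)}(S)$.

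Refining the $\Theta(n^2)$ surfaces into $\Theta(n^2)$ such patches, the family satisfies the hypotheses of Theorem~7.7 of~\cite{SA95} in its form for partially defined, simply shaped surface patches: each patch lies on a bounded-degree algebraic surface, each is delimited by a bounded number of bounded-degree arcs, and any three of them meet in $O(1)$ points (being cut out of bounded-degree algebraic surfaces with bounded-degree boundaries). Hence the upper envelope of the patches, and therefore $V_\K^{(f)}(S)$, has combinatorial complexity $O\bigl((n^2)^{2+\eps'}\bigr)=O(n^{4+\eps})$ for every $\eps>0$.

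The only point needing a little care is the passage from the totally-defined surfaces literally mentioned in the assumptions quoted in Theorem~\ref{TH-ub-circ} to the partially-defined patches arising here; this is covered by the standard extension of the envelope bound to surface patches bounded by a constant number of bounded-degree arcs, which is exactly what was already invoked for $V_\K^{(n)}(S)$. Everything else --- the degree and intersection bounds, and the identification of $V_\K^{(f)}(S)$ with an upper envelope --- is routine, so I do not anticipate a genuine obstacle; the point is simply that in the furthest-neighbor case the $\Theta(n^2)$ surfaces cannot be pruned down to $O(n)$ as they were via the Delaunay triangulation in the nearest-neighbor case, which is why the bound degrades from $O(n^{2+\eps})$ to $O(n^{4+\eps})$.
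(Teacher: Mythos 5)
Your proposal is correct and follows essentially the same route as the paper, whose own proof is a one-line appeal to the upper envelope of the $\Theta(n^2)$ ``well-behaved'' Voronoi surfaces as in Theorem~\ref{TH-ub-circ}. Your explicit decomposition of each surface into the constant number of algebraic patches cut out by $\gamma_{pq}$, $\ell_p$, and $\ell_q$ is exactly the detail the paper leaves implicit (it is only gestured at in the nearest-neighbor proof), and your verification of continuity across the patch boundaries and of the hypotheses for the surface-patch version of the envelope bound is a welcome elaboration rather than a deviation.
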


\begin{proof}
   As in the proof of Theorem~\ref{TH-ub-circ},
   we prove this claim by using the upper envelope of $\Theta(n^2)$
   ``well-behaved'' Voronoi surfaces.
\end{proof}


\section{View Angle}

\label{S-angle}

We now define the 2-site view-angle distance function:
\begin{definition}
   Given two points $p,q$ in the plane, the ``view-angle distance''
   $\V$ from a point $v$ in the plane to the unordered pair $(p,q)$ is
   defined as $\V(v,(p,q)) = \measuredangle{pvq}$.
\end{definition}
Similarly to the circumcircle-radius distance function, the view-angle
function is undefined at the $n$ given points.  For a fixed pair of
points $p$ and $q$, the curve $\V(v,(p,q)) = \pi$ is the open line
segment connecting the two points $p$ and~$q$, while the curve
$\V(v,(p,q)) = 0$ is the line $\overline{pq}$ excluding
the closed line segment $pq$.  The curve
$\V(v,(p,q)) = \pi/2$ is the circle whose diameter is the line
segment $pq$ (excluding, again, $p$ and $q$).

\begin{theorem}
   Let $S$ be a set of $n$ points in the plane.
   The combinatorial complexity of $V_\V^{(n)}(S)$ is $\Omega(n^4)$.
\end{theorem}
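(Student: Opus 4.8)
The plan is to realize the $\Omega(n^4)$ bound by a suitable placement of $S$ together with an incidence count, in the same spirit as the $\Omega(n^4)$ lower bound for $V_{\C}^{(f)}(S)$. The relevant structural fact, already recorded above, is that for a fixed pair $(p,q)$ the \emph{minimum} value $0$ of $\V(\cdot,(p,q))$ is attained exactly on the line $\overline{pq}$ with the closed segment $pq$ deleted. Hence, in $V_{\V}^{(n)}(S)$, every point of $\overline{pq}\setminus pq$ lies in (the closure of) the region of $(p,q)$, and a point that lies on two such punctured lines at once is tied between the two corresponding pairs.

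Concretely, I would take $S$ to consist of $n$ points in convex position, placed in general position in the sense that no two of the $\Theta(n^2)$ lines spanned by pairs of $S$ are parallel and no three of them are concurrent at a point not belonging to $S$; a generic placement of $n$ points on a circle has these properties. For each $4$-element subset of $S$ there are exactly two ways to split it into two vertex-disjoint \emph{non-crossing} chords, namely the two pairs of opposite sides of the corresponding convex quadrilateral, giving $2\binom{n}{4}=\Theta(n^4)$ such chord pairs in total. If $(p,q)$ and $(p',q')$ form one of them, their supporting lines $\overline{pq}$ and $\overline{p'q'}$ meet at a single point $w$ lying outside both segments $pq$ and $p'q'$ (the lines of two opposite sides of a convex quadrilateral meet beyond that quadrilateral). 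By the genericity assumption, $w$ lies on no other spanning line, so $\V(w,(p,q))=\V(w,(p',q'))=0$ while $\V(w,(r,s))>0$ for every other pair $(r,s)$, and distinct chord pairs give distinct points $w$.

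It then remains to check that each such $w$ is a vertex (a $0$-face) of $V_{\V}^{(n)}(S)$, not just an interior point of a bisector edge. For this I would carry out a short local computation near $w$: the view angle grows linearly as one leaves a line on which it vanishes, so $\V(v,(p,q))=c_1\operatorname{dist}(v,\overline{pq})+O(\operatorname{dist}^2)$ and $\V(v,(p',q'))=c_2\operatorname{dist}(v,\overline{p'q'})+O(\operatorname{dist}^2)$ with $c_1,c_2>0$, whereas $\V(\cdot,(r,s))$ stays bounded away from $0$ on a neighbourhood of $w$ for each of the finitely many remaining pairs. Consequently, on a small disk about $w$ the nearest pair is always $(p,q)$ or $(p',q')$; since $\overline{pq}$ and $\overline{p'q'}$ cross transversally at $w$, the regions of these two pairs appear there as two interlocking double wedges with common apex $w$, so $w$ is indeed a vertex. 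Summing over the $\Theta(n^4)$ distinct points $w$ gives $|V_{\V}^{(n)}(S)|=\Omega(n^4)$.

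The main obstacle is this last local step — confirming that $w$ is a genuine $0$-face rather than a point in the relative interior of an edge — together with the (routine but necessary) check that the general-position requirements are compatible with keeping $S$ in convex position; everything else is the same elementary incidence count that underlies the circumscribing-circle lower bound.
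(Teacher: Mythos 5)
Your proposal is correct, but it reaches the bound by a genuinely different route from the paper. The paper works with an arbitrary point set in general position: it forms the geometric graph whose edges are the two ray-complements of each of the $\binom{n}{2}$ segments (hooked up to one auxiliary far-away vertex), and invokes the crossing-number lemma with $m=n(n-1)$ edges to force $\Omega(m^3/n^2)=\Omega(n^4)$ crossings among the rays; each crossing is a point where two distinct pairs simultaneously attain view angle $0$, hence a feature of $V_\V^{(n)}(S)$. You instead exhibit an explicit worst-case configuration (generic convex position) and count the features directly: for each $4$-subset, the two pairs of opposite sides of the convex quadrilateral give supporting lines meeting outside both segments, yielding $2\binom{n}{4}=\Theta(n^4)$ distinct points at which exactly two pairs achieve the minimum value $0$. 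Your approach is more elementary (no extremal graph theory), and you are more careful than the paper about the final step --- verifying via the first-order expansion $\V(v,(p,q))=c_1\,\mathrm{dist}(v,\overline{pq})+O(\mathrm{dist}^2)$ that each intersection point is a degree-$4$ vertex where two double wedges interlock, rather than an interior point of an edge; the paper simply asserts that the crossings are ``features.'' What the paper's argument buys in exchange is generality: it shows the $\Omega(n^4)$ bound for \emph{every} sufficiently generic point set, whereas your construction only certifies the worst case --- which is all the theorem requires. The local step you flag as the main obstacle is indeed routine: at a crossing $w$ outside both segments one has $\angle pvq\approx h\,(1/|wp|-1/|wq|)$ for perpendicular displacement $h$, so the linear coefficients are strictly positive and the transversality argument goes through.
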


\begin{proof}
   Consider a set $S$ of $n$ points in the plane.  An example of the
   intersection of the complements of two segments defined by two pairs
   of points (with respect to the supporting lines) is shown in
   Figure~\ref{fig:NN-VD-V}(a).
   \begin{figure}
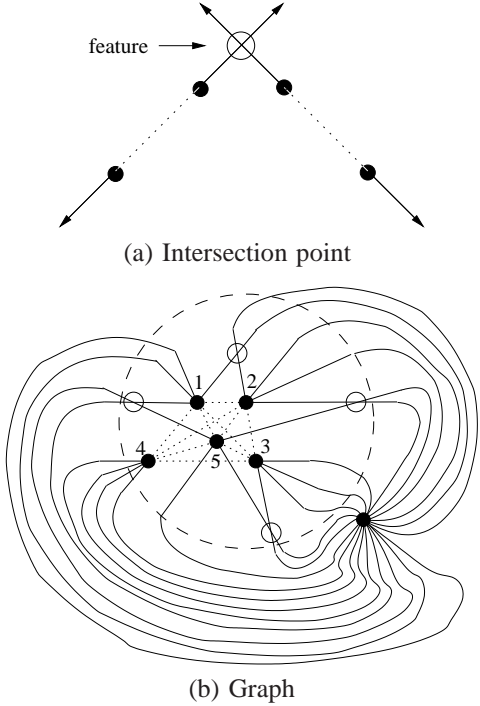

      \centering
      \begin{tabular}{c}
         \scalebox{0.75}{\input{view_nn.tex}} \\
         (a) Intersection point \medskip \\
         \scalebox{0.65}{\input{view_nn_2-new.tex}} \\
         (b) Graph
      \end{tabular}
      \caption{The graph of $V_\V^{(n)}(S)$.}
      \label{fig:NN-VD-V}
   \end{figure}
   These intersection points are features of $V_\V^{(n)}(S)$; we show
   that there are $\Omega(n^4)$ such points.  To this aim we create a
   geometric graph $G$ whose vertices are the
   given points, in which each segment's complement defines two edges.
   We add one additional point far away from the convex hull of $S$,
   and connect it (without adding intersections) to all the rays
   as shown in Figure~\ref{fig:NN-VD-V}(b).
   We can now use the crossing-number lemma for bounding from below the
   number of intersections of the original rays.
   The lemma tells us that every drawing of a graph with
   $n$ vertices and $m \geq 4n$ edges (without self or parallel edges)
   has $\Omega(m^3/n^2)$ crossing points~\cite{ACNS82,Le83}.  In our
   case $m = 2 \binom{n}{2} = n(n-1)$, so the number of intersection
   points in $G$ is $\Omega(n^6/n^2) = \Omega(n^4)$.
   All these intersection points are features of $V_\V^{(n)}(S)$,
   and hence the lower bound.
\end{proof}

\begin{theorem}
   Let $S$ be a set of $n$ points in the plane.
   The combinatorial complexity of both $V_\V^{(n)}(S)$ and $V_\V^{(f)}(S)$
   is $O(n^{4+\eps})$ (for any $\eps > 0$).
\end{theorem}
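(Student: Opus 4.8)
The plan is to follow the proof of Theorem~\ref{TH-ub-circ} almost verbatim: express $V_\V^{(n)}(S)$ and $V_\V^{(f)}(S)$ as envelopes of a family of $\Theta(n^2)$ Voronoi surfaces (one per unordered pair of sites) over the plane of the free point $v$, verify that this family is ``well-behaved'' in the sense of Assumptions~7.1 of~\cite[p.~188]{SA95}, and then quote Theorem~7.7 of~\cite[p.~191]{SA95}. The only genuine novelty is that the view angle is transcendental, so I would first replace it by an algebraic surrogate. Since $\measuredangle pvq\in[0,\pi]$ and $\cos$ is strictly decreasing on $[0,\pi]$, minimizing $\V(v,(p,q))$ over the unordered pairs $\{p,q\}$ is equivalent to maximizing
\[
   g(v,(p,q)) \;=\; \cos\bigl(\measuredangle pvq\bigr)
             \;=\; \frac{(p-v)\cdot(q-v)}{\lVert p-v\rVert\,\lVert q-v\rVert},
\]
and conversely; hence $V_\V^{(n)}(S)$ is the upper envelope, and $V_\V^{(f)}(S)$ the lower envelope, of the $\binom n2$ surfaces $z=g(v,(p,q))$, and it suffices to bound the complexity of both envelopes.

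Next I would verify Assumptions~7.1 for this family. Let $N(v)$ denote the degree-$2$ numerator above and $D(v)=\lVert p-v\rVert^2\,\lVert q-v\rVert^2$ the degree-$4$ polynomial under the root. The surface $z=g(v,(p,q))$ is the semi-algebraic set $\{\,z^2D(v)=N(v)^2,\ zN(v)\ge0,\ -1\le z\le1\,\}$, hence a portion of an algebraic surface of constant degree, totally defined except at the two singular points $v=p$ and $v=q$, which together contribute only $O(n^2)$ extra features and can be absorbed into the bound. For any three pairs, the three associated surfaces meet in $O(1)$ points: the bisector $\{v:g(v,(p,q))=g(v,(p',q'))\}$ is the locus where segments $pq$ and $p'q'$ subtend equal angles, which on each fixed side of $\overline{pq}$ and of $\overline{p'q'}$ is part of an algebraic curve of constant degree (a union of boundedly many circular arcs); intersecting it with a third such bisector therefore leaves only finitely many---indeed $O(1)$---points, the general-position hypotheses carried over from Section~\ref{S-circum-cir} ruling out common components. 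With Assumptions~7.1 established, Theorem~7.7 of~\cite[p.~191]{SA95} bounds the complexity of both the lower and the upper envelope of these $\Theta(n^2)$ surfaces by $O\bigl((n^2)^{2+\eps}\bigr)=O(n^{4+\eps})$ for any $\eps>0$, which is the asserted bound.

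I expect the ``well-behavedness'' step to be the main (indeed the only) obstacle: one must track the sign that is lost when the square-root denominator of $g$ is cleared by squaring, confirm that each surface and each pairwise bisector is genuinely a bounded-degree semi-algebraic object, and dispose cleanly of the two points at which $g$ is undefined. Once the monotone substitution $\cos$ is fixed, this is routine and mirrors the corresponding discussion in the proof of Theorem~\ref{TH-ub-circ}; everything downstream is a black-box application of Davenport--Schinzel theory.
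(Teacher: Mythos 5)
Your proposal is correct and follows essentially the same route as the paper: replace the angle by a monotone algebraic surrogate (the paper uses $-\cos\measuredangle{pvq}=(|pq|^2-|vp|^2-|vq|^2)/(2|vp||vq|)$ via the cosine law, which is just the negation of your $g$), verify Assumptions~7.1 of Sharir--Agarwal for the resulting $\Theta(n^2)$ constant-degree surfaces, and invoke their Theorem~7.7 to get $O(n^{4+\eps})$ for both envelopes. Your only slip is the parenthetical claim that the equal-angle bisector of two distinct segments is a union of circular arcs---it is not in general, though it is a bounded-degree algebraic curve, which is all the argument needs.
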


\begin{proof}
   For analyzing $V_\V^{(n)}$(S) and $V_\V^{(f)}(S)$ we consider the function
   $(-\cos \measuredangle{pvq})$ instead of that of $\measuredangle{pvq}$.
   This is permissible since the cosine function is strictly decreasing in
   the range $[0,\pi]$.
   By the cosine law, we have
   $-\cos \measuredangle{pvq} = (|pq|^2-|vp|^2-|vq|^2)/(2|vp||vq|)$.
   As we have already seen more than once in this paper,
   this means that the respective collection of $\Theta(n^2)$ Voronoi
   surfaces fulfills Assumptions~7.1 of~\cite[p.~188]{SA95}.
   Hence, we may apply Theorem~7.7 of [ibid., p.~191] and obtain the
   claimed bound on the complexity of $V_\V^{(n|f)}(S)$.
\end{proof}

\begin{theorem}
   \label{TH-lb-angle-fn}
   Let $S$ be a set of $n$ points in the plane.
   The combinatorial complexity of $V_\V^{(f)}(S)$ is $\Omega(n^4)$.
\end{theorem}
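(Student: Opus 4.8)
The plan is to mimic the lower-bound argument already used for $V_\C^{(f)}(S)$, replacing the family of $\Theta(n^2)$ lines $\overline{pq}$ by the family of $\Theta(n^2)$ \emph{open segments} $pq$, and replacing a general-position configuration by a convex one. The key observation is the one recorded just before the theorem: for a fixed pair $p,q$ the level set $\V(v,(p,q))=\pi$ is exactly the open segment $pq$, and $\pi$ is the largest value the view angle can ever attain. Hence every point in the relative interior of the segment $pq$ has $(p,q)$ as a furthest pair, so the open segment $pq$ is contained in the region of $(p,q)$ in $V_\V^{(f)}(S)$.

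Next I would fix the configuration. Let $S$ be a set of $n$ points in convex position, chosen moreover so that no three of the $\binom{n}{2}$ segments spanned by $S$ pass through a common point (points placed at generic angles on a circle will do). For every $4$-element subset of $S$, the two diagonals of the corresponding convex quadrilateral cross in a single point lying in the relative interior of both diagonals; distinct $4$-subsets give distinct crossing points, so there are $\binom{n}{4}=\Theta(n^4)$ such points, and by the non-concurrency assumption each of them lies on exactly two of the segments. Then, for such a crossing point $x$ lying in the relative interiors of two distinct segments $p_1q_1$ and $p_2q_2$, the first paragraph shows that $x$ belongs simultaneously to the region of $(p_1,q_1)$ and to the region of $(p_2,q_2)$ in $V_\V^{(f)}(S)$: both attain there the maximal value $\pi$, while every point in a small punctured neighbourhood of $x$ lying off these two segments is off all of the $\binom{n}{2}$ open segments and hence has furthest view angle strictly smaller than $\pi$. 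Thus $x$ is a vertex (feature) of $V_\V^{(f)}(S)$, exactly as in the $V_\C^{(f)}(S)$ argument, and there are $\Theta(n^4)$ of them, giving the $\Omega(n^4)$ bound.

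The only real subtlety — and the step I expect to need the most care — is the claim that each segment crossing is genuinely a feature of the furthest-neighbour diagram rather than being ``absorbed'' into a two-dimensional region; but this is the same degeneracy already exploited for $V_\C^{(f)}(S)$ (there the role of the open segments is played by the full lines), and it is handled in the same way: the locus where the furthest view angle equals its global maximum $\pi$ is precisely the union of the $\binom{n}{2}$ open segments, which near $x$ is a transversal pair of arcs and therefore forces a $0$-dimensional cell of the diagram at $x$. Note also that, unlike in the $V_\C^{(f)}(S)$ proof, convex position is essential here: an arbitrary (even general-position) point set need not span $\Theta(n^4)$ pairwise segment crossings, whereas convex position yields exactly $\binom{n}{4}$ of them (one could also invoke the crossing-number lemma on the complete geometric graph on $S$, as in the proof for $V_\V^{(n)}(S)$, but the convex configuration makes this unnecessary).
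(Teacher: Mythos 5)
Your proof is correct, and its first half --- the observation that the open segment $pq$ is exactly the level set $\V(\cdot,(p,q))=\pi$, that $\pi$ is the global maximum of the view angle, and that therefore every transversal crossing of two such segments is a feature of $V_\V^{(f)}(S)$ --- is precisely the paper's key idea. Where you diverge is in the counting step: the paper takes an \emph{arbitrary} point set, views the $\binom{n}{2}$ segments as a straight-line drawing of $K_n$, and invokes the crossing-number lemma to get $\Omega(m^3/n^2)=\Omega(n^4)$ crossings, whereas you place $S$ in convex position and count $\binom{n}{4}$ diagonal crossings directly. Your route is more elementary (no crossing-number lemma needed) and gives an explicit clean count, but it buys this at the cost of proving only a worst-case lower bound: the paper's argument shows that \emph{every} general-position set $S$ already forces $\Omega(n^4)$ features, which is how the theorem is literally stated (note that the paper appends ``in the worst case'' to other lower-bound theorems but not to this one). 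Relatedly, your closing remark that convex position is ``essential'' because an arbitrary general-position set ``need not span $\Theta(n^4)$ pairwise segment crossings'' is false: the crossing-number lemma (equivalently, the fact that the rectilinear crossing number of $K_n$ is $\Theta(n^4)$) guarantees $\Omega(n^4)$ crossing pairs for every straight-line drawing of $K_n$, i.e., for every point set with no three collinear. So convex position is a convenience, not a necessity, and dropping it (via the parenthetical alternative you yourself mention) would recover the full strength of the paper's statement.
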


\begin{proof}
   Given a set $S$ of $n$ points in the plane, we count the intersections
   of pairs of line segments, where each segment is defined by points of
   $S$ (see Figure~\ref{fig:FN-VD-V}(a)).
   \begin{figure}
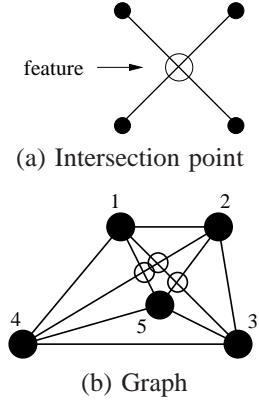

      \centering
      \begin{tabular}{c}
         \scalebox{0.75}{\input{view_fn.tex}} \\
         (a) Intersection point \medskip \\
         \scalebox{1.30}{\input{view_fn_2-new.tex}} \\
         (b) Graph
      \end{tabular}
      \caption{The graph of $V_\V^{(f)}(S)$.}
      \label{fig:FN-VD-V}
   \end{figure}
   We create a geometric graph whose vertices are the given points,
   and the edges are the line segments connecting every pair of points (see
   Figure~\ref{fig:FN-VD-V}(b)).
   The intersections of the segments defined by all pairs of points
   define features of $V_\V^{(f)}(S)$, because along these segments
   the view-angle function assumes its maximum possible value, $\pi$.
   We can now use the crossing-number lemma for counting these
   intersections.  The graph with $n$ vertices and $m \geq 4n$ edges
   (without self or parallel edges) has $\Omega(m^3/n^2)$ crossing
   points~\cite{ACNS82,Le83}.  In this case $m = \binom{n}{2} = n(n-1)/2$,
   hence $\Omega(n^4)$ is a lower bound on the complexity of $V_\V^{(f)}(S)$.
\end{proof}

Results by Asano et al.~\cite{AKTT06} immediately imply that the edges of
$V_\V^{(n|f)}(S)$ represent pieces of polynomial curves of degree at most
three.  However, the structure of the part of $V_\V^{(f)}(S)$ that lies
outside the convex hull $\mathcal{CH}(S)$ of $S$ is fairly simple: it is
given by the arrangement of lines supporting the edges of $\mathcal{CH}(S)$.
This arrangement can be computed by a standard incremental algorithm in
optimal $\Theta(k^2)$ time and space, where $k$ denotes the number of
vertices of $\mathcal{CH}(S)$.  Each cell of the arrangement should then be
labeled with a pair of sites from $S$, to the Voronoi region of which it
belongs; this extra task can be completed within the same complexity bounds.


\section{Radius of Inscribed Circle}

We now define the 2-site ``radius-of-inscribed-circle'' distance function:
\begin{definition}
   Given two points $p,q$ in the plane, the ``inscribed radius distance''
   $\R$ from a point $v$ in the plane to the unordered pair $(p,q)$,
   denoted by $\R(v,(p,q))$, is defined as the radius of the circle
   inscribed in the triangle $\triangle vpq$ (Figure~\ref{fig:inscr}).
\end{definition}

   \begin{figure}
      \centering
      \scalebox{0.70}{\input{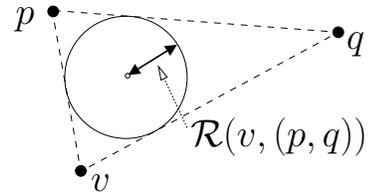}}
      \caption{$\R(v,(p,q))$ is the radius of the circle inscribed in
               $\triangle vpq$.}
      \label{fig:inscr}
   \end{figure}

\begin{theorem}
   \label{TH-lb-ins-rad-nn}
   Let $S$ be a set of $n$ points in the plane.
   The combinatorial complexity of $V_{\R}^{(n)}(S)$ is $\Omega(n^4)$.
\end{theorem}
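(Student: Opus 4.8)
The plan is to exploit the fact that $\R$ attains its global minimum value, namely $0$, exactly on the lines through pairs of sites, and then to count the pairwise crossings of these lines, mirroring the proof of the first theorem of Section~\ref{S-circum-cir} and of Theorem~\ref{TH-lb-angle-fn}.

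First I would record the elementary identity $\R(v,(p,q)) = 2|\triangle vpq| / (|vp|+|vq|+|pq|)$, which shows that $\R(v,(p,q)) \ge 0$, with equality if and only if $v$, $p$, $q$ are collinear, i.e., $v$ lies on the line $\overline{pq}$ (recall that $v \ne p,q$, where $\R$ is undefined). Hence, for a fixed pair $(p,q)$, the zero set of $\R(\cdot,(p,q))$ is the entire line $\overline{pq}$ minus the two sites, and since $0$ is the global minimum of $\R$ over all pairs, every such point lies in the closure of the region of $(p,q)$ in $V_{\R}^{(n)}(S)$.

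Next I would choose $S$ in general position --- no three sites collinear, and no three of the $\binom{n}{2}$ lines $\overline{pq}$ concurrent except at a site (a site $p$ automatically lies on the $n-1$ lines $\overline{pq}$ through it) --- which is legitimate since only a lower-bound construction is needed. The resulting $\Theta(n^2)$ lines then have $\Theta(n^4)$ pairwise intersection points; after discarding the $O(n^3)$ pairs of lines that share a site, still $\Theta(n^4)$ of these crossings occur away from $S$, are pairwise distinct, and each is the transversal intersection of two lines $\overline{pq}$ and $\overline{rs}$ with $\{p,q\}\cap\{r,s\}=\emptyset$.

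Finally I would argue that each such crossing point $x$ is a feature (a $0$-cell) of $V_{\R}^{(n)}(S)$: in a neighborhood of $x$ the one-dimensional piece of the region of $(p,q)$ carried by $\overline{pq}$ meets the one-dimensional piece of the region of $(r,s)$ carried by $\overline{rs}$, so the subdivision has a vertex at $x$. Since there are $\Omega(n^4)$ such points, the combinatorial complexity of $V_{\R}^{(n)}(S)$ is $\Omega(n^4)$. The only delicate point is the bookkeeping in the general-position step --- ensuring that the crossings that ought to count are pairwise distinct, avoid the sites, and are not absorbed into higher-order coincidences --- but this is routine and entirely analogous to the arguments already used for $V_{\C}^{(f)}(S)$ and $V_{\V}^{(f)}(S)$; I do not expect any genuine obstacle.
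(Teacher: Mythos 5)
Your proposal is correct and follows essentially the same route as the paper: the inradius vanishes exactly on the lines through pairs of sites, so each of the $\Theta(n^4)$ pairwise intersections of the $\Theta(n^2)$ lines (for sites in general position) is a feature of $V_{\R}^{(n)}(S)$. Your version merely spells out the general-position bookkeeping that the paper leaves implicit.
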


\begin{proof}
   The intersection point of any two lines defined by the points from $S$
   is a distinct feature of the Voronoi diagram under discussion.
   Thus, $n$ points in $S$ define $\Theta(n^2)$ lines, which have
   $\Theta(n^4)$ intersection points.
\end{proof}

\begin{theorem}
   Let $S$ be a set of $n$ points in the plane.
   The combinatorial complexity of both $V_{\R}^{(n)}(S)$ and
   $V_{\R}^{(f)}(S)$ is $O(n^{4+\eps})$ (for any $\eps > 0$).
\end{theorem}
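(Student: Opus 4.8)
The plan is to reuse verbatim the Davenport--Schinzel machinery already invoked in the proofs of Theorems~\ref{TH-ub-circ} and following. As there, the combinatorial complexity of $V_\R^{(n)}(S)$ (resp.\ $V_\R^{(f)}(S)$) equals that of the lower (resp.\ upper) envelope of the collection of $\Theta(n^2)$ Voronoi surfaces $z=\R(v,(p,q))$, one for every unordered pair $(p,q)$ of sites. It therefore suffices to check that this collection satisfies Assumptions~7.1 of~\cite[p.~188]{SA95} --- bounded-degree algebraic surfaces, (essentially) totally defined, every triple meeting in $O(1)$ points --- and then to quote Theorem~7.7 of~[ibid., p.~191], which with $m=\Theta(n^2)$ surfaces yields $O(m^{2+\eps})=O(n^{4+\eps})$.

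The only point requiring work is that $\R$ itself is not a polynomial, so one has to exhibit a bounded-degree polynomial vanishing on each surface. I would start from the classical identity $\R(v,(p,q))=2\,|\triangle vpq|/(|vp|+|vq|+|pq|)$ (twice the area over the perimeter). Fix $p,q$ and set $a:=|pq|$ (a constant), $A:=2|\triangle vpq|$ (an affine function of $(v_x,v_y)$, up to sign), $b^2:=|vq|^2$ and $c^2:=|vp|^2$ (quadratics in $(v_x,v_y)$). Writing $z$ for the inradius, the relation $z(a+b+c)=A$ becomes $z(b+c)=A-za$; squaring gives $z^2(b^2+c^2)+2z^2bc=(A-za)^2$, so $2z^2bc=(A-za)^2-z^2b^2-z^2c^2=:Q(v_x,v_y,z)$, a polynomial; squaring once more yields the polynomial identity $4z^4\,b^2c^2=Q^2$, in which $b^2c^2=|vp|^2|vq|^2$ is again a polynomial. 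Hence the surface $z=\R(v,(p,q))$ lies on the zero set of a single polynomial $P_{pq}(v_x,v_y,z)$ of constant degree; on it the inradius is recovered as the (nonnegative) branch picked out by bounded sign conditions on the intermediate quantities, so --- exactly as in the proof of the $O(n^{2+\eps})$ bound for $\K$ --- the surface is a constant number of ``well-behaved'' patches, single-valued over the whole plane except at the two singular points $v=p,v=q$ (and it takes the value $0$ along the line $\overline{pq}$, where $\triangle vpq$ degenerates). This is ``totally defined'' in the (stronger than needed) sense of Assumptions~7.1.

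For the third assumption, any three of these are bounded-degree algebraic surfaces in $\mathbb{R}^3$, so by Bézout's theorem their common zero set is either $O(1)$ points or positive-dimensional; the latter is excluded by the general-position hypothesis on $S$ (and is in any case harmless in the framework of~\cite[Ch.~7]{SA95}). With Assumptions~7.1 verified, Theorem~7.7 applies and gives the claimed $O(n^{4+\eps})$ bound simultaneously for the nearest- and the furthest-neighbor diagram.

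The only real obstacle is bookkeeping: one must carry out the two successive squarings carefully enough to be certain that $P_{pq}$ has constant degree and that the parasitic branches introduced by squaring do not corrupt the ``well-behaved'' structure of the patch actually carrying $\R$ --- i.e.\ that the surface really can be written as $O(1)$ bounded-complexity semialgebraic patches, which is the same routine check performed for $\C$ and $\V$ earlier in the paper. No new idea beyond the lower-envelope argument already used repeatedly appears to be needed.
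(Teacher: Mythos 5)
Your proof is correct and follows essentially the same route as the paper's: both start from the identity $\R(v,(p,q))=2\A(v,(p,q))/\PP(v,(p,q))$, observe that the resulting Voronoi surfaces are constant-degree algebraic (the paper merely asserts this, whereas you carry out the two squarings explicitly to exhibit the bounded-degree polynomial), and then invoke the Davenport--Schinzel lower/upper-envelope bound of Theorem~7.7 of~\cite{SA95} for $\Theta(n^2)$ surfaces. Your extra care in eliminating the square roots and checking the patch structure only makes the argument more complete; no new idea is involved.
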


\begin{proof}
   Let $p,q$ be two points in $S$, and $v$ a point in the plane.
   It is a well-known fact that $\R(v,(p,q)) = 2\A(v,(p,q))/\PP(v,(p,q))$,
   where $\A(v,(p,q))$ and $\PP(v,(p,q))$ are the area and perimeter,
   respectively, of the triangle $\triangle vpq$.  Both the numerator and
   denominator of this fraction can be written as algebraic expressions
   using the coordinates of the points $v,p,q$.  Hence, as above, the
   standard Davenport-Schinzel machinery can be applied for obtaining the
   claim bounds.
\end{proof}

\begin{theorem}
   Let $S$ be a set of $n$ points in the plane.
   The combinatorial complexity of $V_\R^{(f)}(S)$ is $\Omega(n)$
   in the worst case.
\end{theorem}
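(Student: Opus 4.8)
The plan is to construct, for every $n$, a set $S$ of $n$ points for which $V_\R^{(f)}(S)$ has $\Omega(n)$ non-empty regions; since a planar subdivision with $k$ faces has combinatorial complexity $\Omega(k)$, this proves the theorem. I would take $S$ to be a generic set of $n$ points on a circle, so that $S$ is in convex position, no two edges of the polygon $S$ are parallel, and the usual non-degeneracy conditions hold. The crux is to determine, for a point $v$ very far from $\mathcal{CH}(S)$, which pair $(p,q)$ maximizes $\R(v,(p,q))$.

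First I would establish an asymptotic formula. Write $v=D\,u$ with $u=(\cos\theta,\sin\theta)$ and $D\to\infty$, fix $p,q\in S$, and put $u^{\perp}=(-\sin\theta,\cos\theta)$ and $w_\theta(p,q):=\bigl|\langle p-q,\,u^{\perp}\rangle\bigr|$. Starting from the identity $\R(v,(p,q))=2\A(v,(p,q))/\PP(v,(p,q))$ recalled above, and using $|vp|=D-\langle u,p\rangle+O(1/D)$ (and likewise for $|vq|$) together with $2\A(v,(p,q))=|(p-v)\times(q-v)|=|\,p\times q-D\,((p-q)\times u)\,|=D\,w_\theta(p,q)+O(1)$, one obtains
\[
   \R(D\,u,(p,q))\;=\;\tfrac12\,w_\theta(p,q)\;+\;O(1/D),
\]
with an error term uniform over all $\binom{n}{2}$ pairs. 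Hence, for generic $\theta$ and $v$ far enough in direction $\theta$, the furthest pair is the one maximizing $w_\theta$, i.e.\ the pair $(a,b)$ consisting of the point of $S$ at which the linear functional $x\mapsto\langle x,u^{\perp}\rangle$ is largest and the point at which it is smallest.

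Next I would invoke the rotating-calipers picture: as $\theta$ sweeps a half-turn, the pair of extreme points of $\langle\cdot,u^{\perp}\rangle$ over $S$ runs exactly through the $n$ antipodal pairs of the (generic) convex polygon $S$. Fix one such pair $(a,b)$ and a direction $\theta_0$ strictly inside the open arc of directions for which $(a,b)$ is the extreme pair; then $w_{\theta_0}(a,b)$ exceeds $w_{\theta_0}(p,q)$ for every other pair $(p,q)$ by some fixed $\delta=\delta(S)>0$. Combining this with the uniform $O(1/D)$ bound, for all $D$ beyond a threshold depending only on $S$ the point $v=D\,u_{\theta_0}$ satisfies $\R(v,(a,b))>\R(v,(p,q))$ for every other pair $(p,q)$; since these strict inequalities are preserved under small perturbations of $v$, the region of $(a,b)$ has non-empty interior. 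Distinct antipodal pairs give distinct regions, so $V_\R^{(f)}(S)$ has $\Theta(n)$ non-empty regions, whence the claimed $\Omega(n)$ lower bound.

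The asymptotic expansion is routine; the main point requiring care is its uniformity over all pairs at once, which is what lets the finitely many $O(1/D)$ error terms be dominated by the single gap $\delta$ once $D$ is large. Beyond that one only needs the standard facts that a convex $n$-gon in general position has exactly $n$ antipodal pairs and that for a generic direction the extreme pair is unique, so I do not anticipate any genuinely hard step.
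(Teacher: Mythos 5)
Your argument is correct and is essentially the paper's own proof: both take $S$ in convex position, observe that as $v$ recedes to infinity in a fixed direction the inradius $\R(v,(p,q))=2\A/\PP$ tends to half the width of the strip spanned by $p,q$ orthogonal to that direction, and conclude via rotating calipers that each of the $\Theta(n)$ antipodal pairs of $\mathcal{CH}(S)$ owns a non-empty region. Your version merely makes the limit computation and the uniformity of the error term explicit, which the paper leaves informal.
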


\begin{proof}
   The complexity of $V_\R^{(f)}(S)$ can be as high as $\Omega(n)$.
   Let $S$ be a set of $n$ point in convex position with no three
   collinear points.
   Let $p$ and $q$ be two antipodal vertices of $\mathcal{CH}(S)$, the
   convex hull of $S$, and consider two parallel lines $\ell_p \ni p$ and
   $\ell_q \ni q$ tangent to $\mathcal{CH}(S)$ only at $p$ and $q$,
   respectively.
   Next, consider any point $v \in \ell_p$, and let it move along $\ell_p$
   in either direction.  In the limit, the distance from $v$ to any pair
   $(s,t)$ of sites in $S$ equals the width of the infinite strip bounded
   by two lines parallel to $\ell_p$ and passing through $s$ and $t$,
   respectively.  Consequently, the points of $\ell_p$ lying sufficiently
   far from $p$ belong to the Voronoi region of $(p,q)$.
   Since the number of pairs of antipodal vertices of $\mathcal{CH}(S)$
   is $\Theta(n)$, the bound follows.
\end{proof}

A similar reasoning leads to a conclusion that $V_{\R}^{(f)}(S)$ has at
most a linear number of unbounded regions.  To demonstrate this, consider
any point $u$ in the plane, and a line $\ell \ni u$.  Observe that the
points of $\ell$ lying sufficiently far from $u$ belong to the Voronoi
region of the pair(s) of points from $S$ that define the width of $S$ in
the direction orthogonal to $\ell$, and, thus, represent a pair (pairs)
of antipodal vertices of $\mathcal{CH}(S)$.  Since the union of all such
lines gives the whole plane, and the number of antipodal vertices of
$\mathcal{CH}(S)$ is at most linear, the claim follows.


\section{Distances Based on the Center of the Circumscribing Circle}

Let $v,p,q$ be three points in the plane. Consider the circle $\circ(v,p,q)$
passing through $v,p,q$ with center $o_{vpq}$.
We now define three more distance functions based on the above notation:
\begin{definition}
   Given two points $p,q$ in the plane, the three distances, denoted by
   $\XS(v,(p,q))$, $\XA(v,(p,q))$, and $\XP(v,(p,q))$, respectively, are
   the distance from $o_{vpq}$ to the line segment $pq$, the area of the
   triangle $\triangle o_{vpq}pq$, and the perimeter of
   $\triangle o_{vpq}pq$, respectively (Figure~\ref{fig:circ-based}).
\end{definition}

   \begin{figure}
      \centering
      \scalebox{0.55}{\input{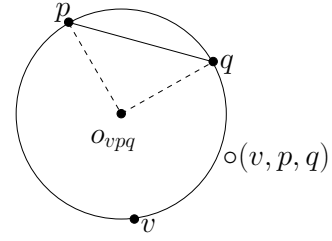}}
      \caption{The circle $\circ(v,p,q)$ is defined by the points $v,p,q$,
               and has the center at $o_{vpq}$.  $\XS(v,(p,q))$ is the
               distance from $o_{vpq}$ to the segment $pq$ (or,
               equivalently, the height of $\triangle o_{vpq}pq$
               perpendicular to $pq$), and $\XA(v,(p,q))$, and
               $\XP(v,(p,q))$ are the area and the perimeter of $\triangle
               o_{vpq}pq$, respectively.}
      \label{fig:circ-based}
   \end{figure}

The upper bound of $O(n^{4+\eps})$ (for any $\eps > 0$) on the complexity of
the nearest- and furthest-neighbor Voronoi diagrams under each of these
distance functions can be, again, derived by means of Davenport-Schinzel
machinery.  Below we provide some lower bounds.  First, we address the
nearest-neighbor case.

\begin{theorem}
   Let $S$ be a set of $n$ points in the plane.
   The combinatorial complexity of $V_\XS^{(n)}(S)$ and $V_\XA^{(n)}(S)$
   is $\Omega(n^4)$ in the worst case.
\end{theorem}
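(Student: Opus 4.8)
The plan is to identify the zero sets of $\XS$ and $\XA$ --- which turn out to be circles --- and then to count the crossings of these circles in a suitable configuration.

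\emph{Step 1 (a normal form for the two distances).} For a fixed pair $p,q$ the circumcenter $o_{vpq}$ lies on the perpendicular bisector $\ell_{pq}$ of $pq$, and $\ell_{pq}$ meets the segment $pq$ at its midpoint $m_{pq}$; hence the foot of the perpendicular from $o_{vpq}$ to the line $\overline{pq}$ is $m_{pq}$, which lies on the segment. Therefore
\[
 \XS(v,(p,q)) = |o_{vpq}-m_{pq}|,\qquad
 \XA(v,(p,q)) = \tfrac12\,|pq|\cdot|o_{vpq}-m_{pq}|.
\]
Both are non-negative, and, since $|pq|>0$, they vanish simultaneously, exactly when $o_{vpq}=m_{pq}$, i.e.\ when $|vm_{pq}|=|pq|/2$, i.e.\ when $\angle pvq=\pi/2$, i.e.\ when $v$ lies on the circle $D_{pq}$ having $pq$ as a diameter. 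So $0$ is the global minimum of each of $\XS(\cdot,(p,q))$ and $\XA(\cdot,(p,q))$, attained exactly on $D_{pq}\setminus\{p,q\}$.

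\emph{Step 2 (crossings of zero-circles are vertices of the diagram).} Every point of $D_{pq}$ lies in the closed Voronoi region of $(p,q)$ (it realizes the global minimum $0$), so a point $v_0\in D_{pq}\cap D_{rs}$ lies on the common boundary of the regions of $(p,q)$ and $(r,s)$. Writing $\XS(v,(p,q))=R(v)\,|\cos\angle pvq|$ with $R$ the circumradius, one checks that near a point of $D_{pq}$ off the line $\overline{pq}$ the function $\XS(\cdot,(p,q))$ is a positive smooth factor times $|g_{pq}|$, where $g_{pq}$ is a smooth defining function of $D_{pq}$ --- so its graph has a crease along $D_{pq}$, and similarly for $\XA$. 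Consequently, if $D_{pq}$ and $D_{rs}$ cross transversally at such a $v_0$ and no third circle $D_{tu}$ passes through $v_0$, then in a neighborhood of $v_0$ the diagram is determined by the two pairs $(p,q),(r,s)$ alone (all other pairs keep $\XS,\XA>0$ there), the bisector splits into two smooth arcs through $v_0$, and $v_0$ is a degree-four vertex of both $V_\XS^{(n)}(S)$ and $V_\XA^{(n)}(S)$. This local analysis --- verifying that a zero-circle crossing is a genuine feature rather than being masked, and simultaneously arranging exact general position without destroying any crossing --- is the step I expect to be the main obstacle; both points are handled by the stability of transversal intersections plus a perturbation argument. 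It then suffices to exhibit a point set with $\Omega(n^4)$ transversal, non-concurrent crossings among the $\binom n2$ circles $\{D_{pq}\}$.

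\emph{Step 3 (the configuration).} Take $x_i=(i,\eta_i)$ for $1\le i\le n$, with the $\eta_i$ arbitrarily small and generic. As all $\eta_i\to0$, $D_{x_ax_c}$ tends to the circle whose diameter is $[a,c]\times\{0\}$, and for every quadruple $a<b<c<d$ the circles with diameters $[a,c]$ and $[b,d]$ cross transversally in two points, because those two intervals properly interleave (an elementary computation with the centers $\frac{a+c}2,\frac{b+d}2$ and radii $\frac{c-a}2,\frac{d-b}2$). Since transversal crossing is an open condition, for all sufficiently small $\eta$ each of the $\binom n4$ quadruples $a<b<c<d$ still contributes two transversal crossings of $D_{x_ax_c}$ and $D_{x_bx_d}$; a generic choice of $\eta$ additionally puts $S$ in general position, makes all these crossings pairwise distinct, and excludes triple concurrences. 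Invoking Step 2, this produces $2\binom n4=\Omega(n^4)$ vertices of $V_\XS^{(n)}(S)$ and of $V_\XA^{(n)}(S)$, which proves the claim (and, together with the upper bound $O(n^{4+\eps})$ mentioned above, shows the bound is tight up to the $\eps$).
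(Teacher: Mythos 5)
Your proof is correct and rests on exactly the same key observation as the paper's: the zero set of both $\XS(\cdot,(p,q))$ and $\XA(\cdot,(p,q))$ is the circle with diameter $pq$, so it suffices to exhibit $\Omega(n^4)$ pairwise crossings among these $\Theta(n^2)$ circles that survive a genericity perturbation. The only differences are cosmetic: the paper realizes the crossings with two tight clusters on parallel lines (so that \emph{all} $\Theta(n^4)$ pairs of cross-cluster circles intersect), whereas you use near-collinear points and count the $2\binom{n}{4}$ crossings coming from interleaved quadruples, and you supply a more explicit local argument (the crease of $R\,|\cos\angle pvq|$ along the circle) for why each transversal crossing is a genuine degree-four vertex, a point the paper asserts without detail.
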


\begin{proof}
   The key observation is the following.  Consider a pair $(p,q)$ of sites,
   and let $\circ(p,q)$ denote the circle with the diameter $pq$.  Then, for
   any point $v \in \circ(p,q) \setminus \{p,q\}$, we have
   $\XS(v,(p,q)) = \XA(v,(p,q)) = 0$.

   Consider two parallel lines $l_1$ and $l_2$, and let $d$ denote the
   distance between them.  For a given $n\ge 2$, let us construct a set $S$
   of $n$ points as a union of two sets $S_1\subset l_1$ and $S_2\subset l_2$
   consisting of $\lceil n/2\rceil$ and $\lfloor n/2 \rfloor$ points,
   respectively, in the following way.  The sets $S_1$ and $S_2$ are
   constructed iteratively; at each odd step, a new point is added to $S_1$,
   and at each even one---to $S_2$.  For any $i$: $2\le i\le n$, let $S_1^i$
   and $S_2^i$ denote the two sets constructed so far, and let
   $M^i=\{\circ(p,q) | p\in S_1^i, q\in S_2^i\}$ denote the set of circles
   defined by pairs of points from different sets.
   We want each circle from $M^n$ to pass through precisely two points from $S$ (those defining it), each two circles from $M^n$ to intersect, and no three of them to
   pass through the same point not contained in $S$. Then $\Theta(n^2)$ circles composing $M^n$
   will give rise to $\Theta(n^4)$ distinct intersection points, each
   belonging to a separate feature of either Voronoi diagram under
   consideration, and the claim will follow.

   To ensure the first property, we select the points so that the distance
   between each two points contained in the same set $S_i$ is much smaller
   than $d$, where $i=1,2$. To guarantee the second property, at each step
   $j$: $3\le j\le n$, when adding a new point $s$ to the respective set,
   we make sure that for any point $t$ from the other set, the circle
   $\circ(s,t)$ passes neither through any point from $S_1^{j-1}\cup S_2^{j-1}\setminus \{t\}$ nor through any intersection point of the
   circles from~$M^{j-1}$.
   This completes the proof.
\end{proof}

\begin{theorem}
   Let $S$ be a set of $n$ points in the plane.
   The combinatorial complexity of $V_\XP^{(n)}(S)$ is $\Omega(n)$ in the
   worst case.
\end{theorem}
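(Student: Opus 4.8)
The plan is to exhibit one specific $n$-point set $S$ for which $V_\XP^{(n)}(S)$ has $\Omega(n)$ non-empty regions; since distinct non-empty regions are distinct features of the diagram, this yields the bound, exactly as in the earlier $\Omega(n)$ proofs for $V_\K^{(n)}(S)$ and $V_\R^{(f)}(S)$.

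The first step is to rewrite the distance in a more usable form. Since $o_{vpq}$ is the circumcenter of $\triangle vpq$, we have $|o_{vpq}\,p| = |o_{vpq}\,q| = \mathrm{Rad}(\circ(v,p,q))$, so $\XP(v,(p,q)) = 2\,\mathrm{Rad}(\circ(v,p,q)) + |pq|$; in particular $\XP(v,(p,q)) \ge 2|pq|$, with equality exactly when $v$ lies on the circle $\circ(p,q)$ with diameter $pq$ (then $\circ(v,p,q)=\circ(p,q)$ has radius $|pq|/2$). So the minimum possible value of $\XP(\cdot,(p,q))$ is $2|pq|$, attained on $\circ(p,q)\setminus\{p,q\}$.

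Next I would take $S$ to be the union of $\lfloor n/2\rfloor$ well-separated two-point clusters: any two points of the same cluster are within distance $\eps$, any two points of different clusters are at distance at least $D$, with $D>3\eps$, and the configuration in general position. For the $i$-th cluster $\{p_i,q_i\}$ choose $v_i\in\circ(p_i,q_i)$ with $v_i\notin\{p_i,q_i\}$ (so $v_i\notin S$); then $\XP(v_i,(p_i,q_i)) = 2|p_iq_i| \le 2\eps$. The crux is to show that at $v_i$ every other pair is strictly worse. Any pair $(s,t)\ne(p_i,q_i)$ has at least one endpoint outside cluster $i$; since $v_i$ lies within distance $\eps$ of cluster $i$, the triangle $\triangle v_i s t$ then has a side of length at least $D-\eps$ (namely $|v_is|$ or $|v_it|$ when $s$ or $t$ lies outside, or $|st|$ when both do), and the law of sines ($2\,\mathrm{Rad}=(\text{side})/\sin(\text{opposite angle})\ge(\text{side})$) gives $\mathrm{Rad}(\circ(v_i,s,t))\ge(D-\eps)/2$, whence $\XP(v_i,(s,t))\ge D-\eps>2\eps\ge\XP(v_i,(p_i,q_i))$. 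Thus $(p_i,q_i)$ is the unique closest pair at $v_i$, and by continuity of $\XP$ it owns a neighborhood of $v_i$; the $\lfloor n/2\rfloor$ regions so obtained are distinct, proving the bound.

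I expect the only delicate point is the case split in that last step — keeping track of pairs with one endpoint inside cluster $i$ versus both outside, and noting that when $\triangle v_ip_iq_i$ is right-angled at $v_i$ the triangle $\triangle o_{v_ip_iq_i}p_iq_i$ degenerates onto the segment $p_iq_i$, so one should read its perimeter as the sum of the three side lengths $2\,\mathrm{Rad}(\circ(v_i,p_i,q_i))+|p_iq_i|$ (equivalently, replace $v_i$ by a point just off $\circ(p_i,q_i)$, which changes $\XP(v_i,(p_i,q_i))$ by an arbitrarily small amount and leaves a gap of order $D$ to all competitors). Everything else is elementary.
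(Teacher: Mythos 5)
Your proof is correct and rests on the same key observation as the paper's: $\XP(v,(p,q)) = 2\,\mathrm{Rad}(\circ(v,p,q)) + |pq| \ge 2|pq|$, with equality precisely on the circle with diameter $pq$, so one only needs a configuration in which linearly many pairs uniquely realize this minimum (the paper uses $n$ equally spaced collinear points and their $n-1$ consecutive pairs, you use $\lfloor n/2\rfloor$ well-separated two-point clusters; the separation/law-of-sines argument and the handling of the degenerate triangle are fine, modulo the harmless slip that when $s,t$ both lie outside cluster $i$ but in a common cluster the long side is $|v_is|$ rather than $|st|$).
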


\begin{proof}
   A linear lower bound in the worst case for $V_\XP^{(n)}(S)$ can be
   obtained in the following way.  Choose the set $S$ of points to lie on
   some line $\ell$, so that the distance between any two consecutive
   points is~1. Then, the minimum possible value for the distance function
   $\XP$ is obviously~2, and can be achieved only for a pair $(p,q)$ of
   consecutive points.  For each such pair $(p,q)$, consider the circle
   $\circ(p,q)$ with the diameter $pq$.  Evidently,
   for any point $v \in \circ(p,q) \setminus \{p,q\}$, we have
   $\XP(v,(p,q)) = 2$, and for any other pair $(s,t)$ of sites,
   $\XP(v,(s,t)) > 2$.  We conclude that each pair of consecutive points
   along $\ell$ has a non-empty region in $V_\XP^{(n)}(S)$.  Since there
   are $n-1$ pairs of consecutive points, the bound follows.
\end{proof}

Second, we address the furthest-neighbor Voronoi diagrams.

\begin{theorem}
   Let $S$ be a set of $n$ points in the plane.
   The combinatorial complexity of all of $V_\XS^{(f)}(S)$,
   $V_\XA^{(f)}(S)$, and~$V_\XP^{(f)}(S)$ is $\Omega(n^4)$.
\end{theorem}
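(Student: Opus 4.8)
The plan is to reuse, almost verbatim, the line-arrangement argument that gave the $\Omega(n^4)$ lower bound for $V_\C^{(f)}(S)$. The key preliminary observation is that, for a fixed pair $(p,q)$, each of the three functions blows up precisely along the line $\overline{pq}$. Indeed, $o_{vpq}$ always lies on the perpendicular bisector of $pq$, so the foot of the perpendicular from $o_{vpq}$ to $\overline{pq}$ is the midpoint of $pq$ and in particular lies on the segment $pq$; hence $\XS(v,(p,q))$ is exactly the perpendicular distance from $o_{vpq}$ to $\overline{pq}$, namely $\sqrt{R^2-(|pq|/2)^2}$, where $R=\C(v,(p,q))$ is the circumradius of $\triangle vpq$. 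Consequently $\XA(v,(p,q))=\tfrac12|pq|\cdot\XS(v,(p,q))$, while $\XP(v,(p,q))=|pq|+|o_{vpq}p|+|o_{vpq}q|=|pq|+2R$, since $p$ and $q$ lie on the circle centered at $o_{vpq}$. As $v$ approaches $\overline{pq}$ the circle $\circ(v,p,q)$ degenerates and $R\to\infty$, so all three of $\XS$, $\XA$, $\XP$ tend to $+\infty$; thus, in the same limiting sense used for $V_\C^{(f)}(S)$, each of these functions equals $\infty$ along the line $\overline{pq}$ (minus the segment $pq$ and its endpoints).

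With this in hand, the argument is identical to the one used for $V_\C^{(f)}(S)$. The $n$ points of $S$ determine $\Theta(n^2)$ lines $\overline{pq}$, and along each such line the pair $(p,q)$ realizes the maximal (infinite) value of the distance function, so $\overline{pq}$ is contained in the furthest-neighbor region of $(p,q)$ in $V_\F^{(f)}(S)$ for every $\F\in\{\XS,\XA,\XP\}$. I would choose $S$ in sufficiently general position (for instance on a generic convex curve) so that these $\Theta(n^2)$ lines are pairwise non-parallel, no three of them are concurrent, and none of their pairwise intersections coincides with a point of $S$; the arrangement of these lines then has $\Theta(n^4)$ distinct vertices. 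Each such vertex $v$ lies on two lines $\overline{pq}$ and $\overline{rs}$ with $\{p,q\}\ne\{r,s\}$, so both pairs attain the value $\infty$ at $v$ while every other pair attains a finite value there; hence $v$ lies on the common boundary of the two distinct furthest-neighbor regions of $(p,q)$ and $(r,s)$ and is a vertex (feature) of $V_\F^{(f)}(S)$. Since there are $3\binom{n}{4}=\Theta(n^4)$ such intersection points, this gives the claimed $\Omega(n^4)$ lower bound, simultaneously for $V_\XS^{(f)}(S)$, $V_\XA^{(f)}(S)$, and $V_\XP^{(f)}(S)$.

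The only step requiring real care is the first one: one must verify that the three functions genuinely diverge to infinity (rather than merely becoming undefined) as $v\to\overline{pq}$, and treat this ``value at infinity'' consistently, exactly as was already done for $\C$. The computation above reduces this to the single fact that the circumradius $R(v,p,q)$ grows without bound as $v\to\overline{pq}$, which is immediate from the degeneration of $\circ(v,p,q)$. Everything else — the general-position choice of $S$, the count $3\binom{n}{4}$, and the identification of each line crossing with a Voronoi feature — is routine and is word-for-word the same as in the $V_\C^{(f)}(S)$ case.
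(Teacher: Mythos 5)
Your proof is correct and takes essentially the same route as the paper, which disposes of this theorem by declaring the proof identical to its earlier line-arrangement argument: the $\Theta(n^2)$ lines through pairs of sites, along which the distance is extremal, meet in $\Theta(n^4)$ points that are all features of the diagram. You additionally spell out the one point the paper leaves implicit --- that $\XS$, $\XA$, and $\XP$ all diverge to $+\infty$ as $v$ approaches $\overline{pq}$, via $\XS=\sqrt{R^2-(|pq|/2)^2}$, $\XA=\tfrac12|pq|\cdot\XS$, and $\XP=|pq|+2R$ --- which is exactly the justification needed.
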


In each case, the proof is identical to that of
Theorem~\ref{TH-lb-ins-rad-nn}.


\section{Parameterized Perimeter}

Finally, we define the 2-site parameterized perimeter distance function:
\begin{definition}
   Given two points $p,q$ in the plane and a real constant $c \geq -1$,
   the ``parameterized perimeter distance'' $\PP_c$ from a point $v$ in
   the plane to the unordered pair $(p,q)$ is
   defined as $\PP_c(v,(p,q)) = |vp|+|vq|+c\cdot |pq|$.
\end{definition}

We require that $c$ be greater than or equal to $-1$ since allowing
$c < -1$ would result in negative distances.  Letting $c=-1$ results in a
distance function that equals~0 for all the points on the line segment~$pq$.
 If $c=0$, we deal with the ``sum of distances'' distance function
introduced in~\cite{BDD02} and recently revisited in~\cite{VB10}.
For $c=1$, the above definition yields the ``perimeter'' distance function
$\PP(v,(p,q)) = \mathrm{Per}(\triangle vpq)$.

In~\cite{HB09} it was proven that the combinatorial complexity of the
nearest-neighbor 2-site perimeter Voronoi diagram of a set of $n$ points is slightly
superquadratic in $n$.  In a nutshell, the proof was based
on the observation that any pair of sites that has a non-empty region in
the perimeter diagram also has a non-empty region in the sum-of-distances
diagram.  This immediately implies that the number of such pairs is
linear in $n$.  (However, unlike in the sum-of-distances
diagram, a region in the perimeter diagram is not necessarily continuous.
We were able to construct examples in which the number of connected
components of a \emph{single} region is comparable to the number of points!)
Again, one can apply the standard Davenport-Schinzel machinery and
conclude the claimed upper bound on the complexity of the diagram.
It remains unclear whether the worst-case complexity of the diagram is
linear, quadratic, or in between.
The proof in~\cite{HB09} of the main observation was extremely complex.
We provide here an alternative and much simpler proof of the same bound,
which generalizes to the case of ``parameterized perimeter'' distance
function for any $c \geq 0$.

\begin{theorem}
   \label{TH-ub-per}
   Let $S$ be a set of $n$ points in the plane.
   The combinatorial complexity of $V_\PP^{(n)}(S)$ is $O(n^{2+\eps})$
   (for any $\eps > 0$).
\end{theorem}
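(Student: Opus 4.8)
The plan is to reduce the bound to a count of how many pairs of sites have a non-empty region, and then to reuse the Davenport--Schinzel argument from the proof of Theorem~\ref{TH-ub-circ}. Each pair $(p,q)$ contributes the Voronoi surface $v\mapsto\PP(v,(p,q))=|vp|+|vq|+|pq|$, which consists of a constant number of low-degree algebraic patches, and $V_\PP^{(n)}(S)$ is the minimization diagram of these $\Theta(n^2)$ surfaces. Surfaces that never attain the lower envelope contribute nothing to it, so if only $N$ of the pairs have a non-empty region then the complexity of $V_\PP^{(n)}(S)$ equals that of the lower envelope of those $N$ surfaces, which is $O(N^{2+\eps})$ by Theorem~7.7 of~\cite{SA95}. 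Hence it suffices to prove $N=O(n)$.

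To this end I would prove the key claim: if $(p,q)$ has a non-empty region in $V_\PP^{(n)}(S)$, then it also has a non-empty region in the 2-site ``sum of distances'' diagram $V_{\PP_0}^{(n)}(S)$ (the case $c=0$). This finishes the count, because $v$ lies in the region of $(p,q)$ in $V_{\PP_0}^{(n)}(S)$ iff $|vp|+|vq|\le|vs|+|vt|$ for every pair $(s,t)$; testing $(s,t)=(p,r)$ and $(s,t)=(q,r)$ shows this is equivalent to $|vp|$ and $|vq|$ being the two smallest distances from $v$ to $S$, so the non-empty regions of $V_{\PP_0}^{(n)}(S)$ are exactly the non-empty cells of the order-$2$ Voronoi diagram of $S$, of which there are only $O(n)$.

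For the key claim, the main observation is that on the open segment $pq$ one has $\PP(w,(p,q))=|wp|+|wq|+|pq|=2|pq|$, and if such a $w$ lies in the region of $(p,q)$ in $V_\PP^{(n)}(S)$ then, for every pair $(s,t)$, the triangle inequality $|st|\le|ws|+|wt|$ gives $2|pq|\le\PP(w,(s,t))=|ws|+|wt|+|st|\le2(|ws|+|wt|)$, hence $|wp|+|wq|=|pq|\le|ws|+|wt|$; that is, $w$ is already in the region of $(p,q)$ in $V_{\PP_0}^{(n)}(S)$. So it is enough to show that a non-empty region of $(p,q)$ in $V_\PP^{(n)}(S)$ must meet the segment $pq$. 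I would argue this by starting from any point $v$ of that region and moving it toward the midpoint of $pq$ along a straight segment: since $\PP(\cdot,(p,q))$ is convex with its minimum set equal to $pq$, this motion decreases $\PP(v,(p,q))$ monotonically, and the task is to verify that no competing pair overtakes $(p,q)$ before the segment is reached. Equivalently, one may take the point of the (closed) region minimizing $\PP(\cdot,(p,q))$: if its value is not $2|pq|$, then its gradient there --- the sum of the unit vectors from $p$ and from $q$ to that point --- is nonzero, so the point lies on the boundary of the region with some pairs tight, a configuration one then rules out using the relation forced among those unit vectors and the unit vectors toward the tight competitors.

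I expect this last step --- that a non-empty perimeter region of $(p,q)$ cannot avoid the segment $pq$ --- to be the crux, as it is precisely where the additive term $c\,|pq|$ (here $c=1$) must be balanced against the lengths and positions of the other pairs; everything else (the sum-of-distances reduction, the linear bound on order-$2$ Voronoi cells, and the Davenport--Schinzel estimate) is standard. The same scheme should carry over to $\PP_c$ for any $c\ge0$: on $pq$ one has $\PP_c(w,(p,q))=(1+c)|pq|$, and $|st|\le|ws|+|wt|$ again yields $|pq|\le|ws|+|wt|$, so the only place needing the extra argument is, once more, the fact that a non-empty region reaches the segment.
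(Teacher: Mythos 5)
Your overall architecture (count the pairs of sites with non-empty regions, then apply the Davenport--Schinzel bound to that many surfaces) matches the paper's, and your reduction from ``the region of $(p,q)$ meets the segment $pq$'' to ``$(p,q)$ has a non-empty region in the sum-of-distances diagram'' is correct as far as it goes. But the lemma you yourself identify as the crux --- that a non-empty region of $(p,q)$ in $V_\PP^{(n)}(S)$ must meet the segment $pq$ --- is false, so the proof cannot be completed along these lines. Take $p=(0,0)$, $q=(10,0)$, $s=(4,5)$, $t=(6,5)$. For $v=(5,-M)$ with $M$ large one gets $\PP(v,(p,q))\approx 2M+10$, while $\PP(v,(p,s))=\PP(v,(q,t))\approx 2M+11.4$, $\PP(v,(p,t))=\PP(v,(q,s))\approx 2M+12.8$, and $\PP(v,(s,t))\approx 2M+12$; hence the region of $(p,q)$ is non-empty (indeed unbounded). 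Yet for every $w$ on the closed segment $pq$ we have $\PP(w,(p,q))=20$, whereas $\PP(w,(s,t))=|ws|+|wt|+2\le\sqrt{41}+\sqrt{61}+2<16.3$ by convexity of $|ws|+|wt|$ along the segment, so the pair $(s,t)$ beats $(p,q)$ at every point of $pq$. Neither of your two suggested arguments (sliding $v$ toward the midpoint, or a gradient analysis at the minimizer of $\PP(\cdot,(p,q))$ over the closed region) can succeed, because they aim at a false statement: a short, distant pair $(s,t)$ can dominate the entire neighborhood of the segment $pq$ while leaving $(p,q)$ the winner far away.

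Note also that the intermediate statement you want --- non-empty perimeter region implies non-empty sum-of-distances region --- is exactly the main result of~\cite{HB09}, whose proof the present paper describes as extremely involved and explicitly sets out to avoid. The paper's proof of Theorem~\ref{TH-ub-per} instead works directly from an arbitrary point $v$ of the region (with $|vp|\le|vq|$): the ellipse with foci $v$ and $p$ passing through $q$ must be empty of other sites, and the circle inscribed in it that is tangent at $q$, centered where the bisector of $\measuredangle vqp$ meets $vp$, passes through $q$, contains $p$, and is empty; hence $pq$ is a Delaunay edge, giving the $O(n)$ count of candidate pairs without ever requiring the region to reach the segment. Your counting of order-$2$ Voronoi cells and the Davenport--Schinzel step are fine, but the segment lemma must be replaced by an argument of this empty-ellipse type (or by an independent proof of the implication from~\cite{HB09}).
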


\begin{proof}
   Refer to Figure~\ref{fig:NN-VD-P}.
   \begin{figure}
      \centering
      \begin{tabular}{c}
         \input{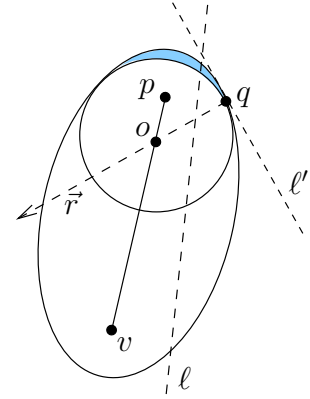}
      \end{tabular}
      \caption{An empty circle containing sites in $\PP$.}
      \label{fig:NN-VD-P}
   \end{figure}
   Let $p,q \in S$ be two sites which have a non-empty region in
   $V_\PP^{(n)}(S)$, and let $v$ be a point in this region, noncollinear with $p$ and $q$.
   In addition, let $\ell$ be the perpendicular bisector of the
   line segment $pq$.
   Assume, without loss of generality, that $|vp| \leq |vq|$.

   Consider the ellipse $O_{vpq}$ passing through $q$ with $v$
   and $p$ as foci.  By definition, for any point $s$ inside this ellipse
   we have $|vs|+|ps| < |vq|+|pq|$.  Therefore,
   \begin{eqnarray}
      \label{EQ-per}
      \PP(v,(p,s)) \! \! \! & = \! \! \! & |vs|+|ps|+|vp| \\
      \nonumber
                            & < \! \! \! & |vq|+|pq|+|vp| \, = \, \PP(v,(p,q)).
   \end{eqnarray}
   This means that $s$ cannot be a site in $S$, for otherwise $v$ would
   belong to the region of $(p,s)$ instead of to the region of $(p,q)$.
   It follows that the ellipse $O_{vpq}$ is empty of any sites other
   than $p$ and $q$.

   Now consider the line $\ell'$ that is tangent to $O_{vpq}$ at $q$, and
   the ray $\vec{r}$ perpendicular to $\ell'$ at $q$ and
   passing through $O_{vpq}$.  It is a known property of ellipses that
   this ray bisects the angle $\measuredangle{vqp}$, and, thus, it
   intersects the line segment $vp$, say, at point $o$.
   The circle $C$ centered at $o$ and passing through~$q$ is tangent to
   $O_{vpq}$ at $q$ (as well as at another point), and is entirely
   contained in $O_{vpq}$.  Since $v$ is closer to $p$ than to $q$ (by our
   assumption), it follows that the circle $C$ also contains~$p$.  (If $p$
   were on the extension of $vp$ in the shaded area, a contradiction would
   easily be obtained by using the triangle inequality:  $|op|>|oq|$, hence
   $|vp|=|ov|+|op|>|ov|+|oq|>|vq|$, contradicting the assumption that
   $|vp| \leq |vq|$.)  Since $O_{vpq}$ is empty of sites (except $p$ and $q$),
   so~is the circle~$C$.
   Therefore, $pq$ is an edge of the Delaunay triangulation of~$S$.
   The number of such edges is linear in $n$, the cardinality of~$S$.

   Hence, there are $\Theta(n)$ respective surfaces of these pairs of sites.
   One can now apply the standard Davenport-Schinzel machinery (as in the
   proof of Theorem~\ref{TH-ub-circ}).  The claim follows.
\end{proof}

Finally, we state the following theorem.

\begin{theorem}
   \label{TH-param-per}
   Let $S$ be a set of $n$ points in the plane.
   \begin{itemize}
   \item[(a)]
      The combinatorial complexity of $V_{\PP_{-1}}^{(n)}(S)$ is
      $\Omega(n^4)$ and $O(n^{4+\eps})$ (for any $\eps > 0$).
   \item[(b)]
      If there is a unique closest pair $p,q \in S$, then when
      $c \to \infty$, the combinatorial complexity of $V_{\PP_c}^{(n)}(S)$
      is asymptotically~1.
   \item[(c)]
      For $c \geq 0$, the combinatorial complexity of $V_{\PP_c}^{(n)}(S)$
      is $O(n^{2+\eps})$ (for any $\eps > 0$).
   \end{itemize}
\end{theorem}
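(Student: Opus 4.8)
The plan is to treat the three parts separately; parts~(a) and~(b) are short, and part~(c), obtained by generalizing the proof of Theorem~\ref{TH-ub-per}, is the substantive one. For part~(a), observe that $\PP_{-1}(v,(p,q))=|vp|+|vq|-|pq|\ge 0$ by the triangle inequality, with equality exactly on the closed segment $pq$; thus along each spanned segment this distance attains its global minimum~$0$, and the situation is combinatorially identical to that of Theorem~\ref{TH-lb-angle-fn}, where the view angle attains its global maximum~$\pi$ along the spanned segments. For $S$ in general position the $\binom n2$ spanned segments cross pairwise in $\Theta(n^4)$ points ($\binom n4$ of them for $S$ in convex position), and each crossing is a vertex of $V_{\PP_{-1}}^{(n)}(S)$, which gives the $\Omega(n^4)$ lower bound; note that the Delaunay-edge restriction used for $c\ge 0$ below fails here precisely because the value~$0$ attained along $pq$ cannot be undercut by a competing pair. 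For the upper bound, the $\Theta(n^2)$ surfaces $v\mapsto |vp|+|vq|-|pq|$ are well-behaved in the sense of Assumptions~7.1 of~\cite{SA95} --- clearing the two square roots exhibits each as a branch of a bounded-degree algebraic surface, and triples meet in $O(1)$ points --- so Theorem~7.7 of~\cite{SA95} yields $O((n^2)^{2+\eps})=O(n^{4+\eps})$.

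For part~(b), write $\PP_c(v,(p,q))-\PP_c(v,(s,t))=\big(|vp|+|vq|-|vs|-|vt|\big)+c\,(|pq|-|st|)$. By the reverse triangle inequality the first bracket is bounded in absolute value, uniformly in $v$, by $|ps|+|qt|$ (or by $|pt|+|qs|$), a constant depending only on $S$, while $|pq|-|st|<0$ for every pair $(s,t)\neq(p,q)$ since $(p,q)$ is the unique closest pair. Hence there is a threshold $c_0$ such that for all $c>c_0$ one has $\PP_c(v,(p,q))<\PP_c(v,(s,t))$ for every point $v$ and every pair $(s,t)\neq(p,q)$, so the region of $(p,q)$ is the whole plane and the diagram is a single cell, of complexity~$1$.

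For part~(c) I would mimic the proof of Theorem~\ref{TH-ub-per}, replacing the ellipse $O_{vpq}$ by the sublevel set $R=\{x:\ |vx|+c\,|px|\le |vq|+c\,|pq|\}$. Since $c\ge 0$, the function $x\mapsto |vx|+c\,|px|$ is convex and tends to~$+\infty$, so $R$ is a compact convex set, symmetric about the line through $v$ and~$p$; for a generic point $v$ of the region of $(p,q)$, labelled so that $|vp|\le |vq|$, it contains $q$ on its boundary and $p$ in its interior. Exactly as in Theorem~\ref{TH-ub-per}, a site $s\notin\{p,q\}$ lying in $R$ would give $\PP_c(v,(p,s))<\PP_c(v,(p,q))$, a contradiction, so $R$ contains no site other than $p$ and~$q$. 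It then remains to carve a site-empty disk through both $p$ and~$q$ out of $R$: the inward normal to $\partial R$ at~$q$ points in a direction that is a nonnegative combination of the two directions from~$q$ to~$v$ and from~$q$ to~$p$, hence it crosses the segment $vp$ at a point~$o$, and the disk $C$ centred at~$o$ through~$q$ then contains~$p$, because $|op|=|vp|-|vo|\le |vq|-|vo|\le |oq|$. Once $C\subseteq R$ is established, $C$ is site-empty, so $pq$ is an edge of the Delaunay triangulation of~$S$; only $O(n)$ pairs then have a non-empty region, and the standard Davenport-Schinzel machinery (as in the proof of Theorem~\ref{TH-ub-circ}) applied to these $O(n)$ well-behaved surfaces gives $O(n^{2+\eps})$.

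The step that genuinely needs work, and which I expect to be the main obstacle, is the inclusion $C\subseteq R$. When $c=1$ it is immediate: $\partial R$ is a conic, $C$ is tangent to it at~$q$ and at the reflection of~$q$ in the line $vp$, and two conics meeting with total multiplicity~$4$ meet nowhere else, so $C$ lies inside~$R$. For $c\neq 1$ the boundary $\partial R$ is only a quartic, so a circle might cross it in up to four further points and this multiplicity argument breaks down; one must argue instead by convexity, exploiting that~$o$ lies on the segment between the two ``foci'' $v$ and~$p$, so that $C$ is the disk inscribed in the lune of~$R$ cut off at~$q$. The extreme case $c=0$ can in any event be settled separately and more cheaply: there $R$ reduces to the disk about~$v$ through~$q$, the region of $(p,q)$ lies in the order-$2$ Voronoi cell of $\{p,q\}$, and only $O(n)$ of these cells are non-empty (see, e.g.,~\cite{BKOS08}), which again leaves $O(n)$ relevant pairs and the same Davenport-Schinzel conclusion.
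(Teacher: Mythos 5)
Parts~(a) and~(b) are fine and essentially match the paper's argument (your quantitative version of~(b), bounding $\bigl||vp|+|vq|-|vs|-|vt|\bigr|$ uniformly by $|ps|+|qt|$, is if anything tighter than the paper's ``the term $c\cdot|pq|$ dominates''). In part~(c) your setup is also the paper's: the Cartesian oval $O^{(c)}_{vpq}=\partial R$ must be empty of sites other than $p$, the inward normal at $q$ meets the line $vp$ at a point $o$, and $p\in C$ follows from $|op|=|vp|-|vo|\le|vq|-|vo|\le|oq|$ once $o$ lies on the segment $vp$. Your gradient argument for that last fact (the inward normal direction at $q$ is $\frac{v-q}{|v-q|}+c\,\frac{p-q}{|p-q|}$, a nonnegative combination of the directions to the two foci, so the normal ray necessarily crosses the segment $vp$) is a genuine simplification of the paper's route, which instead locates $o$ on the medial axis of the oval and bounds the radii of curvature at the two vertices $v'$, $p'$ to show the medial axis lies strictly between $v$ and $p$.

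However, the step you flag as ``the step that genuinely needs work'' --- the inclusion $C\subseteq R$ --- is a genuine gap, and the hint you offer does not close it. Convexity of $R$ is not enough: for a convex body, the disk centered at $q+t\,n$ (with $n$ the inward unit normal at $q$) of radius $t$ is internally tangent at $q$ but need not be contained in the body once $t$ exceeds the radius of curvature at $q$ --- already for an eccentric ellipse the circle tangent at an endpoint of the major axis and centered at the center pokes out through the sides. So ``$C$ is the disk inscribed in the lune of $R$ cut off at $q$'' is not an argument; you must actually prove that this particular $o$ and radius $|oq|$ work. The paper's proof supplies the missing idea: take the axis of symmetry $\overline{vp}$ as the $x$-axis, parameterize the upper half of $C$ as $t(x)=(x,\sqrt{R^2-(x-x_o)^2})$, and observe that $d(v,t(x))=\sqrt{2(x_o-x_v)x+\text{const}}$ is the square root of a \emph{linear} function of $x$ (the quadratic terms cancel), hence concave; likewise $d(p,t(x))$, hence $f(x)=d(v,t(x))+c\,d(p,t(x))$ is concave for $c\ge 0$. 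Since $C$ is tangent to the level set $\{f=k\}$ at $q$, the point $x_q$ is a critical point of $f$, therefore its global maximum, so $f\le k$ on the whole upper semicircle; by symmetry $C\subseteq R$. Without this (or an equivalent) computation, your part~(c) establishes only that a site-empty circle through $q$ \emph{might} exist, not that it does, and the Delaunay-edge count --- and with it the $O(n^{2+\eps})$ bound --- does not follow.
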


\begin{proof}
   \begin{itemize}

   \item[(a)]
      To see the lower bound on the complexity of $V_{\PP_{-1}}^{(n)}(S)$,
      note that every point
      on the segment $pq$ has $\PP_{-1}$-distance zero to the pair $(p,q)$,
      and therefore, the intersection of any pair of segments $p_1 q_1$ and
      $p_2 q_2$ defined by sites $p_1, q_1, p_2, q_2 \in S$ is a feature of
      $V_{\PP_{-1}}^{(n)}(S)$.
      As is demonstrated in the proof of Theorem~\ref{TH-lb-angle-fn},
      the number of these features is $\Omega(n^4)$.
      The upper bound is obtained by using the usual Davenport-Schinzel
      machinery, as in the proof of Theorem~\ref{TH-ub-circ}.

   \item[(b)]
      It is easy to verify that as $c \to \infty$, the term $c \cdot |pq|$
      dominates the distance $\PP_c(v,(p,q))$, and, hence, every point $v$
      in the plane is closer to the unique closest pair of sites
      $p,q \in S$ than to any other pair in $S$.  Hence, the asymptotic
      diagram contains zero vertices, zero edges, and one face (the entire
      plane).

   \item[(c)]
      The proof is a generalized version of the proof of the special case
      $c=1$.  Refer to Figure~\ref{fig:oval}.
   \begin{figure}
      \centering
      \scalebox{0.60}{\input{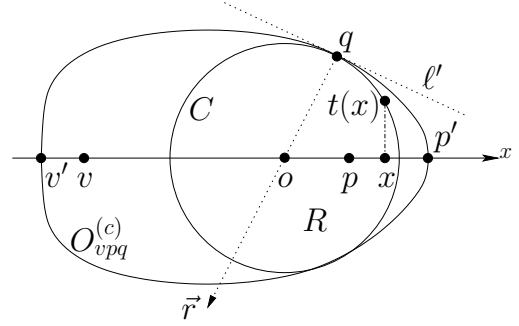}}
      \caption{The Cartesian oval $O_{vpq}^{(c)}$ is the locus of points $q'$,
      for which $|vq'|+c\cdot |pq'|=|vq|+c\cdot |pq|$. The ray $\vec{r}$
      passes through $q$ and is perpendicular to $O_{vpq}^{(c)}$,
      and intersects the axis of symmetry of $O_{vpq}^{(c)}$ at the point $o$.
      The circle $C$ is centered at $o$, and is tangent to $O_{vpq}^{(c)}$ at
      $q$.
      For any point $x$ on the axis of abscissas residing inside $C$, $t(x)$
      denotes the point of~$C$ lying above $x$.}
      \label{fig:oval}
   \end{figure}
      As in the proof of Theorem~\ref{TH-ub-per}, we assume that there
      is a point $v$ in the region of $(p,q)$, such that $|vp|\le |vq|$, and
      $v$ is noncollinear with $p$ and $q$.
      Our goal is to show that for any $c \geq 0$ there exists a circle
      having $q$ on its boundary and containing $p$, which is empty of any
      other site $s$, implying that $p,q$ are Delaunay neighbors.

      As in the proof of Theorem~\ref{TH-ub-per}, let $O^{(c)}_{vpq}$ be
      the locus of points $q'$ for which $\PP_c(v,(p,q'))=\PP_c(v,(p,q))$.
      Thus, $O^{(c)}_{vpq}$ is the \emph{Cartesian oval} $(v,p,c,k)$
      consisting of all points $q'$ that satisfy $|vq'|+c|pq'|=k$, where
      $k=|vq|+c|pq|$ is constant.  (Unless $c=1$, this oval has exactly
      one axis of symmetry: the line joining the two foci $v,p$.)
      Then, if there were a site $s$ within $O^{(c)}_{vpq}$, it would lead
      to a smaller value of $\PP_c$, so $O^{(c)}_{vpq}$ must be empty of
      sites other than $p$.

      As before, let $\vec{r}$ be the ray emanating from $q$ perpendicular
      to and pointing into $O^{(c)}_{vpq}$, and let $o$ be the point where
      $\vec{r}$ crosses the line $pv$.

      Let us further suppose that $c\ne 1$. Without loss of generality,
      assume that $O_{vpq}^{c}$ is symmetric with respect to the axis of
      abscissas (see Figure~\ref{fig:oval}); consequently, the points $p$,
      $v$, and $o$ belong to the latter.  Let $x_p$, $x_v$, $x_o$, and $x_q$
      denote the corresponding coordinate of $p$, $v$, $o$, and $q$,
      respectively.

      Consider a circle $C$ centered at $o$ of the radius $R=|oq|$.  By
      construction, $C$ is tangent to $O_{vpq}^{(c)}$ at $q$.

      For any $x\in \mathbb{R}$, such that the point $(x,0)$ lies inside
      $C$, let $t(x)$ denote the point of $C$ lying above $(x,0)$.  For any
      such $x$, let
      \begin{align*}
      f_v(x)&=d(v,t(x))\\
            &=\sqrt{R^2-(x-x_o)^2+(x-x_v)^2}\\
            &=\sqrt{2(x_o-x_v)\cdot x +x_o^2+x_v^2+R^2}.
      \end{align*}
      Since $f_v(x)$ represents a square root of a linear function,
      it is concave on its domain.  The same will hold for a function
      $f_p(x)=d(p,t(x))$.  Consequently, their weighted combination
      $f(x)=f_v(x)+c\cdot f_p(x)$ is also concave on the same domain, and,
      thus, has a single local maximum.

      Recall that the circle $C$ is tangent to $O_{vpq}^{c}$ at $q$ by
      construction.
      It is easy to see that $C$ is tangent to $O_{vpq}^{c}$ \emph{from
      the inside}: otherwise, $x_q$ would be a local minimum of $f(x)$
      achieved at an inner point of the domain, contradicting the concavity
      of $f(x)$.  It follows that $f(x)$ has a local maximum at $x_q$.
      Together with the previous observation, this implies that $f(x)$ has
      a global maximum at $x_q$. This means that $q$ is the only common
      point of $O_{vpq}^{(c)}$ and the upper half of $C$.  By symmetry, we
      conclude that $C$ lies inside $O_{vpq}^{(c)}$ and touches it at $q$
      and the point symmetric to $q$.  Thus, $C$ must be empty of sites
      other than $p$.

      It remains to demonstrate that $p$ lies inside $C$.  To this end, it
      is sufficient to show that the point $o$ lies between $v$ and $p$;
      then, as in the case of $c=1$, the needed property can be easily
      derived using the triangle inequality.

      Let us argue as follows.  The above reasoning can be carried out for
      any point $q'\in O_{vpq}^{c}$ noncollinear with $v$ and $p$, providing
      us with a maximum empty circle inscribed in $O_{vpq}^{c}$, and tangent
      to it at precisely two points---namely, at $q'$ and its symmetric
      point.  It follows that the medial axis of~$O_{vpq}^{c}$ is a segment
      of the line $\overline{vp}$ through $v$ and $p$.
      Let $v'$ and $p'$ be the intersection points of $\overline{vp}$ and
      $O_{vpq}^{(c)}$ being closer to $v$ and~$p$, respectively (see
      Figure~\ref{fig:oval}).  Consider the circle~$C_v$ with radius $|vv'|$
      centered at~$v$.  Obviously, $v'$ is a common point of $C_v$ and
      $O_{vpq}^{(c)}$, but any other point $z$ of $C_v$ lies strictly inside
      $O_{vpq}^{c}$, since for any such point $z$, we have $|zv|=|v'v|$ and
       $|zp|<|v'p|$.
      This implies that the radius of curvature of $O_{vpq}^{(c)}$ at $v'$
      is greater than $|vv'|$.  A similar statement holds for~$p'$.
      Consequently, the two endpoints of the medial axis must lie between
      $v$ and $p$, and the same must hold for the point $o$.

      We conclude that $C$ is a circle containing both $p$ and $q$ and
      otherwise empty of sites, so $p$ and $q$ are Delaunay neighbors.
      Hence, there are $\Theta(n)$ pairs of sites that generate regions in
      the Voronoi diagram, and the claim follows from the standard
      Davenport-Schinzel machinery.
   \end{itemize}
\end{proof}


\section{Conclusion}

\label{S-conclusion}

In this paper, we have investigated 2-site Voronoi diagrams of point sets
with respect to a few geometric distance functions.  The Voronoi structures
obtained in this way cannot be explained in terms of the previously known
kinds of Voronoi diagrams (which is the case for the 2-site distance
functions thoroughly analyzed in~\cite{BDD02}), what makes them particularly
interesting.  On the other hand, our results can be exploited to advance
research on Voronoi diagram for segments.  Potential directions for future
work include consideration of other distance functions, and generalizations
to higher dimensions and to $k$-site Voronoi diagrams.


\section*{Acknowledgments}

Work on this paper by the first author was performed while he was
affiliated with Tufts University in Medford, MA.
Work by the last author was partially supported by
Russian Foundation for Basic Research (grant~10-07-00156-a).



\begin{thebibliography}{99}


\bibitem{ACNS82}
{\sc M.~Ajtai, V.~Chv\'{a}tal, M.~Newborn, and E.~Szemer\'{e}di},
Crossing-free subgraphs,
\emph{Annals of Discrete Mathematics},
12~(1982), 9--12.

\bibitem{AKTT06}
{\sc T.~Asano, N.~Katoh, H.~Tamaki, and T.~Tokuyama},
Angular Voronoi diagram with applications,
\emph{Proc.\ 3rd Int.\ Symp.\ on Voronoi Diagrams in Science and Engineering},
Banff, Canada, 32--39, 2006.

\bibitem{AKTT07}
{\sc T.~Asano, N.~Katoh, H.~Tamaki, and T.~Tokuyama},
Voronoi diagrams with respect to criteria on vision information,
\emph{Proc.\ 4th Int.\ Symp.\ on Voronoi Diagrams in Science and Engineering},
Pontypridd, Wales, UK, 25--32, 2007.

\bibitem{Au91}
{\sc F.~Aurenhammer},
Voronoi diagram---A survey of a fundamental geometric data structure,
\emph{ACM Computing Surveys},
23~(1991), 345--405.

\bibitem{BDD02}
{\sc G.~Barequet, M.T.~Dickerson, and R.L.S.~Drysdale},
2-point site Voronoi diagrams,
\emph{Discrete Applied Mathematics},
122~(2002), 37--54.

\bibitem{BKOS08}
{\sc M.~de~Berg, M.~van~Kreveld, M.~Overmars, and O.~Schwarzkopf},
\emph{Computational Geometry, Algorithms, and Applications} (3rd ed.),
Springer-Verlag, Berlin, 2008.

\bibitem{DE09}
{\sc M.T.~Dickerson and D.~Eppstein},
Animating a continuous family of two-site Voronoi diagrams (and a proof of
a bound on the number of regions),
\emph{Proc.\ 25th ACM Symp.\ on Computational Geometry},
Aarhus, Denmark, 92--93, 2009.

\bibitem{G08}
{\sc M. Gavrilova (ed.)},
\emph{Generalized Voronoi Diagram: A Geometry-Based Approach to
Computational Intelligence}.
Springer, 2008.

\bibitem{HB09}
{\sc I.~Hanniel and G.~Barequet},
On the triangle-perimeter two-site Voronoi diagram,
{\it Proc.\ 6th Int.\ Symp.\ on Voronoi Diagrams},
Copenhagen, Denmark, 129--136, 2009.

\bibitem{H05}
{\sc D.~Hodorkovsky},
\emph{2-Point Site Voronoi Diagrams},
M.Sc.\ Thesis,
The Technion---Israel Inst.\ of Technology,
Haifa, Israel, 2005.

\bibitem{Le82}
{\sc D.T.~Lee},
On \emph{k}-nearest neighbor Voronoi diagrams in the plane,
\emph{IEEE Trans.\ on Computers},
31~(1982), 478--487.

\bibitem{Le83}
{\sc F.T.~Leighton},
Complexity Issues in VLSI,
MIT Press, Cambridge, MA, 1983.

\bibitem{OBS00}
{\sc A.~Okabe, A.~Boots, B.~Sugihara, and S.N.~Chui},
Spatial Tesselations, 2nd ed.,
Wiley, 2000.

\bibitem{Ro51}
{\sc K.F.~Roth},
{\em On a problem of Heilbronn},
Proc.\ London Mathematical Society,
26~(1951), 198--204.

\bibitem{SH75}
{\sc M.I.~Shamos and D.~Hoey},
Closest-point problems,
\emph{Proc.\ 16th Ann.\ IEEE Symp.\ on Foundations of Computer Science},
Berkeley, CA,
151---162, 1975.

\bibitem{SA95}
{\sc M.~Sharir and P.K.~Agarwal},
\emph{Davenport-Schinzel Sequences and Their Geometric Application},
Cambridge University Press, 1995.

\bibitem{VB10}
{\sc K. Vyatkina and G. Barequet},
On 2-site Voronoi diagrams under arithmetic combinations of
point-to-point distances,
\emph{Proc.\ 7th Int.\ Symp.\ on Voronoi Diagrams},
Qu\'{e}bec City, Qu\'{e}bec, Canada, 33--41, 2010.

\end{thebibliography}
\end{document}